\documentclass{article}

\usepackage{amsmath}
\usepackage{float}
\usepackage{amssymb}
\usepackage[ruled,vlined]{algorithm2e}

\usepackage[authoryear]{natbib}
\usepackage{graphicx}
\usepackage{subfigure}
\usepackage{algorithm2e}
\usepackage{amsthm}
\usepackage{enumerate}

\newtheorem{theorem}{Theorem}

\newtheorem{lemma}{Lemma}

\newtheorem{corollary}{Corollary}

\newtheorem{definition}{Definition}

\makeatletter
\renewcommand{\algocf@captiontext}[2]{#1\algocf@typo. \AlCapFnt{}#2} 
\def\@algocf@capt@plain{top}
\renewcommand{\algocf@makecaption}[2]{%
	\addtolength{\hsize}{\algomargin}%
	\sbox\@tempboxa{\algocf@captiontext{#1}{#2}}%
	\ifdim\wd\@tempboxa >\hsize
	\hskip .5\algomargin%
	\parbox[t]{\hsize}{\algocf@captiontext{#1}{#2}}
	\else%
	\global\@minipagefalse%
	\hbox to\hsize{\box\@tempboxa}
	\fi%
	\addtolength{\hsize}{-\algomargin}%
}
\makeatother

\makeatletter
\newcommand{\disjoint}{\mathrel{\mathpalette\disj@int\relax}}
\newcommand{\disj@int}[2]{%
	\ooalign{%
		\hfil$\m@th#1\cap$\hfil\cr
		\hfil$\m@th#1\diagup$\hfil\cr
	}%
}
\makeatother

\DeclareMathOperator{\pr}{pr}
\DeclareMathOperator{\cor}{cor}

\DeclareMathOperator{\Min}{Min}

\usepackage{url}
\usepackage{xr}
\begin{document}
	
	%
	%
	
	\title{Hypergraph Variable Selection with False Discovery Rate Control}
	
	\author{Sarah Organ, Toby Kenney, Hong Gu \\
	Department of Mathematics and Statistics, Dalhousie University.}

        \maketitle

\begin{abstract}
  Variable selection methods that control the false discovery rate
  often lose power when predictors exhibit complex dependence
  structures. We previously showed that selecting hierarchically
  clustered groups of predictors can mitigate this issue while
  maintaining false discovery rate control. When correlations are less
  structured, however, overlapping predictor sets may be more
  effective. We introduce a generalized false discovery rate for
  hypotheses defined on sets of predictors and propose a
  hypergraph-based selection method. This approach achieves higher
  power across diverse settings while preserving rigorous false
  discovery rate control.
\end{abstract}

\subsubsection*{keywords}
false discovery rate; variable selection; hypergraph methods; multiple hypothesis testing; dependence structure; linear step-up procedures.

\section{Introduction}\label{hvsintro}

Variable selection with false discovery rate (FDR) control is an
important problem in statistics. Classical approaches test, for
each predictor $X_i$, the null hypothesis that it is non-informative,
then apply multiple-testing procedures to guarantee the expected
proportion of false discoveries remains below a target level. Widely
used procedures such as BH~\citep{BH1995} and BY~\citep{BY2001}, apply
broadly provided valid $p$-values are available.

Alternative methods --- such as model-X knockoffs~\citep{knockoff3} and
Gaussian mirror procedures~\citep{GaussianMirrors2023,DS} --- replace
$p$-values with model-dependent test statistics. These approaches also
control FDR through data-adaptive thresholds but are typically
restricted to generalized linear model settings.

However, 
both types of methods underperform when predictors are highly
correlated. In this case, statistical evidence may be insufficient to
distinguish true signals from surrogate variables. Many
FDR-controlling methods respond by selecting neither variable,
resulting in reduced power and sometimes inflated
FDR~\citep{correlation,DS}.

To address this, recent work by the authors~\citep{SHRED} introduces a
setwise variable selection paradigm. When the data
cannot identify the true variable from among highly correlated predictors,
the methods select those candidate variables as a set, deemed to
contain at least one true predictor. 
Because of implications between hypotheses, we needed to
develop substantial generalizations of the linear step-up methods for
multiple testing. These generalizations incorporate the weighted BH
method~\citep{wBH} and also the ``outer nodes
FDR''~\citep{Yekutieli2006}. These setwise variable selection methods
maintain FDR control across a wide range of simulation studies while
achieving higher power than existing methods.

Although setwise selection increases power, the methods developed
by~\cite{SHRED} only test sets of variables from a hierarchical
clustering, which can be overly restrictive. In spatial, temporal, or
other structured 
applications, groups of correlated predictors often overlap, and thus
do not fall into a hierarchical structure. As a result, existing
cluster-based methods may fail to test the most relevant variable
sets, reducing power. This motivates the question addressed in this
paper: how should we conduct setwise variable selection when the
tested sets overlap?

While there has been substantial work on hierarchical multiple
hypothesis testing, for example~\citep{Yekutieli2008,TreeBH2021},
research on FDR control in non-hierarchical structures is very
limited. The DAGGER method~\citep{DAGGER2019} can apply to a general
partially-ordered set of hypotheses. However, the definition of FDR
controlled by the procdure is not appropriately adjusted to account
for the partial order structure on the hypotheses. Focused BH and
variants~\citep{FocusedBH2023,WFBH2025} control FDR in cases where
only a subset of rejected hypotheses are counted as discoveries ---
for example, in hierarchical clustering situations, only the ``outer
nodes'' would be counted as discoveries. The same outer node filter
could be applied to any partial order structure on the hypotheses, but
the appropriate cut-offs for controlling FDR have not been developed
in general, so would need to be determined for each specific
structure. Furthermore, the outer node filter on a partially-ordered
set does not take into account the additional structure for families
of sets of variables. Other methods that can apply in non-hierarchical
situations, such as the $p$-value filter~\citep{Ramdas2019}
deal with different problems, such as simultaneous control of multiple
FDRs.

There has been more work on non-hierarchical multiple testing in the
familywise error rate (FWER) control literature, for example closed
testing methods~\citep{Marcus,Hommel1988} can in principle accommodate
arbitrary sets of hypotheses, but they require testing all subsets,
which is computationally infeasible even for moderate dimensions. The
inheritance principle~\citep{GoemanFinos2012} was developed for
hierarchical-structured hypotheses, but could be extended to
non-nested hypotheses. However, this would require developing custom
inheritance schemes for each application.

Modern methods which eschew $p$-values also have a few developments
for structured hypotheses, such as Multi-resolution knockoff
procedures~\citep{Katsevich2019,Sesia2020}, which  control FDR
simultaneously across layers of hypotheses, similar to the $p$-value
filter. These could be overlapping layers, but each layer consists
of non-overlapping sets. They do not apply to the overlapping-set setting considered here.

The general linear step-up procedure of~\citet{SHRED} can operate on
arbitrary families of hypotheses, but relies on a \emph{sizing
function}~$\sigma$ that determines how many discoveries to count when
a set of hypotheses is rejected. Developing such a sizing function is
the main obstacle for overlapping predictor sets.  In this paper, we
introduce a framework for setwise variable selection using
\emph{hypergraphs}, which naturally represent overlapping sets. We
develop appropriate discovery-counting mechanisms and provide
corresponding FDR guarantees.


\section{Variable Selection on a Hypergraph}\label{hvsvarselection}

We let $X=\{X_1,\ldots,X_p\}$ denote the set of predictor variables,
and $Y$ the response variable. We let $T$ denote the set of true
predictors, and we assume that for any subset $C\subseteq X$, we have
a valid hypothesis test (e.g. a likelihood ratio test) for the null
hypothesis $H_C: C\cap T=\emptyset$. We use $\mathcal{P}(X) = \{A|A
\subseteq X\}$ to denote the powerset of $X$. We recall the
definition of a hypergraph:


\begin{definition}
  A {\em hypergraph} $(X,E)$ consists of a set $X$ of {\em vertices},
  and a set $E\subseteq {\mathcal {P}(X)}$ of {\em edges}, where
  ${\mathcal {P}(X)}$ is the powerset of $X$ (the collection of all
  subsets of $X$).
\end{definition}

We assume that there is at most one edge for any subset
of vertices, and we allow edges for singleton
subsets, but not the empty subset. In our problem, $X$ is the set
of predictor variables, and $E$ is initially the collection,
$\mathcal S$, of sets of variables for which we have tested
the corresponding hypotheses. Without loss of generality, we
assume $X = \bigcup \mathcal{S}$. After we reject a set ${\mathcal
  R}\subseteq{\mathcal S}$ of hypotheses, we consider the
subhypergraph $(X,{\mathcal R})$ of sets corresponding to rejected
hypotheses.


In~\citep{SHRED}, we argued that simply counting the number of
rejected hypotheses is not an appropriate way to assess either the FDR
or power.  There are two main reasons for this. Firstly, some null
hypotheses represent stronger assertions than others; rejecting a
stronger hypothesis is easier, and therefore such rejections should
carry less weight. Secondly, when tested sets have non-empty
intersections, the corresponding hypotheses are dependent. A single
true variable can make several null hypotheses false simultaneously,
so rejecting one null reduces the informational gain from rejecting
others.  To address these issues, we introduced a generalized notion
of FDR and power based on a sizing function $\sigma$, which maps sets
of rejected hypotheses to $\mathbb{R}_{\ge 0}$, rather than relying on
set cardinality. The idea is that $\sigma(A)$ represents the ``number
of discoveries'' from rejecting all hypotheses in the set $A$.

The setting of~\cite{SHRED} was simplified by the hierarchical tree
structure. For any two tested sets, either they are disjoint or one
contains the other. For this structure, $\sigma$ can be set as the sum
of weights of the minimal rejected hypotheses --- that is, those
corresponding to sets for which no proper subset was also
rejected. However, this sizing function is not appropriate for a more
general hypergraph. For example, if
$\mathcal{R}=\{\{1,2\},\{1,3\},\ldots,\{1,p\}\}$, then all $p-1$
elements of $\mathcal{R}$ are minimal, so we would add all their
weights, but rejecting all these null hypotheses is still weaker than
rejecting the single null hypothesis $H_{\{1\}}$, so the sum of all
$p-1$ minimal weights would need to be smaller than 1. As $p$ grows
large, this would only be possible if the weight of all non-singleton
hypotheses goes to 0. In this paper, we develop a new sizing function
for this broader setting.

The subsets of $X$ are ordered by containment, and the corresponding
null hypotheses are ordered by reverse implication. That is, if
$A\subseteq B\subseteq X$, then $H_A\Leftarrow H_B$. Because of this
implication, if we reject $H_A$, then we implicitly reject $H_B$.
Let $\overline{\mathcal R}$ denote the set of all implicitly and
explicitly rejected hypotheses, i.e. the set of all
subsets of $X$ that contain an edge of the hypergraph $G=(X,{\mathcal
  R})$. We define the sizing function $\sigma({\mathcal R})$ as a
function of the cardinality $\big|\overline{\mathcal R}\big|$. Since
rejecting weaker null hypotheses leads to more implicit rejections
than rejecting stronger null hypotheses, this gives more weight to
rejecting weaker null hypotheses, and also accounts for the dependence
between hypotheses. In fact, sets that are not in
$\overline{\mathcal R}$ are well-studied in (hyper)graph theory. They
are called independent sets. (This is not related to the notion of
independence in statistics.)

\begin{definition}
  For a hypergraph $G=(X,{\mathcal R})$, a vertex set, $I
  \subseteq X$ is an {\em independent set} of $G$ if no hyperedge $R
  \in {\mathcal R}$, is a subset of $I$. We use $IS(G)$ to denote the
  set of all independent sets of $G$.
\end{definition}



It is more convenient to describe the number of
discoveries as a function of $|IS(G)|=2^{|X|}-\big|\overline{\mathcal
  R}\big|$. The motivating idea is that selecting
one predictor should count as one discovery. If we select predictor
$X_i\in X$, i.e. we reject the null hypothesis corresponding to the
set $\{X_i\}$, then we implicitly reject the hypotheses corresponding
to all subsets containing $\{X_i\}$. If we have not previously
explicitly rejected hypotheses corresponding to any sets containing
$X_i$, then by rejecting $H_{\{X_i\}}$, we are implicitly rejecting
exactly half of the remaining hypotheses (independent sets). That is,
each discovery should halve the number of independent sets. This gives
us the sizing function
\begin{equation*}
	\sigma({\mathcal R})=|X| - \log_2(|IS(G)|)
\end{equation*}
which reflects the information gained from explicitly and
implicitly rejected subsets of predictors.

In the classical setting of single-variable hypotheses, let $R =
\bigcup \mathcal{R}$ denote the set of rejected singletons. In this
case, the family of independent sets is given by
\[
    IS(X, \mathcal{R}) = \{\, A \subseteq X \setminus R \,\},
\]
from which it follows that
\[
    |IS(G)| = 2^{|X \setminus R|} = 2^{\,|X| - |R|}.
\]
Consequently, the effective rejection size satisfies
\[
    |X| - \log_2 |IS(G)|
    = |X| - \log_2\!\left( 2^{\,|X| - |R|} \right)
    = |R|.
\]
Thus, in this classical regime, the proposed measure reduces exactly to the 
cardinality of the set of selected variables, so this is a
generalization of the classical case.

The complement of an independent set is a vertex cover:

\begin{definition}
  For a hypergraph $G=(X,{\mathcal R})$, a {\em vertex cover} of $G$
  is a subset of vertices, $C \subseteq X$, such that for every
  hyperedge $R \in {\mathcal R}$, at least one node in $C$ is in $R$,
  i.e. for all $R \in \mathcal{R}, C \cap R \neq \emptyset$.
\end{definition}

A vertex cover $C$ is a collection of predictor variables that could
be the true variables without any of the rejected hypotheses being
false positives. From the predictive modelling point of view, any
vertex cover $C$ should be a good set of predictors.

This gives rise to an information-theoretic interpretation of our
sizing function. Initially, there are $2^{|X|}$ possibilities for
which of the predictors are true predictors. Thus, from an information
theoretic point of view, it requires $|X|$ bits of information to
identify the true predictors. If we reject a set $\mathcal R$ of
hypotheses, then the number of possibilities for the set of true
predictors is the number of vertex covers of $G=(X,{\mathcal
  R})$. Thus, it requires $\log_2(|IS(G)|)$ bits to identify the true
predictors. Thus, the information gained from rejecting these
hypotheses is $|X|-\log_2(|IS(G)|)=\sigma(\mathcal{R})$.

In the setting considered by \citep{SHRED}, where the tested variable
sets arise from hierarchical clustering, the
sizing function can be expressed as a sum of weights associated with
the minimal rejected sets, analogous to the sizing functions examined
in that work. Let $S_1, \ldots, S_m$ be the minimal rejected
sets. Because these sets are disjoint, an independent set of the
hypergraph $(X, \{S_1, \ldots, S_m\})$ must include a proper (possibly
empty) subset of each $S_i$, together with an arbitrary subset of $X
\setminus (S_1 \cup \cdots \cup S_m)$.  Thus the total number of
independent sets is
\[
|IS(G)| = 2^{\,|X \setminus (S_1 \cup \cdots \cup S_m)|} \prod_{i=1}^m \left( 2^{|S_i|} - 1 \right)
\]
where $2^{|S_i|} - 1$ is the number of proper subsets of $S_i$. It follows that the sizing function is
\[
\sigma({\mathcal R}) = \mspace{-16mu}\sum_{S \in \Min({\mathcal R})}\mspace{-16mu} w_S, \qquad 
w_S = \log_2\!\left( \frac{2^{|S|}}{\,2^{|S|}-1\,} \right)
\]
where $\Min({\mathcal R})$ denotes the set of minimal elements of the
partially ordered set $\mathcal R$. Thus, the theoretical framework
developed in \citep{SHRED} extends naturally to this sizing function,
although the resulting weights differ from the choice $w_S = 1/|S|$
used in~\cite{SHRED}.


\section{Sizing Function, Power and FDR, and the Procedure}\label{hvsproc}

\subsection{Notation and Generalized Linear Step-up Procedure}

Throughout this article, let $\mathcal{S} = \{S_1, \ldots, S_m\}$ denote the
collection of tested subsets, and assume without loss of generality that
$X = \bigcup \mathcal{S}$. For each $i \in \{1,\ldots,m\}$, let $H_i$
denote the hypothesis corresponding to subset $S_i$.

Let $X_A$ denote the set of true variables and $X_N = X \setminus X_A$
the set of noise variables. Define $I_N = \{\, i \in \{1,\ldots,m\} : S_i \cap X_A = \varnothing \,\}$
as the index set of true null hypotheses --- subsets that
contain no true variables. Its complement, $I_A = \{1,\ldots,m\} \setminus I_N$ 
is the index set of false null hypotheses --- subsets containing at
least one true variable. Let $\mathcal{P} = \{\, J \subseteq \{1,\ldots,m\} \,\}$
denote the powerset of all hypothesis indices, representing all possible
selections of hypotheses.
We use the sizing
function 
$\sigma({\mathcal R})=|X|-\log_2(|IS(G(X,\mathcal{R}))|)$ from
Section~\ref{hvsvarselection}, to define the generalized power
(gPower) and generalized FDR (gFDR) for a generalized variable
selection method.
\begin{equation*}
	\textrm{gFDR} = \mathbb{E}\left(\frac{\sigma(\mathcal{R} \cap I_N)}{\sigma(\mathcal{R})}\right) =
        \mathbb{E}\left(\frac{|X|-\log_2(|IS(G(X, \mathcal{R} \cap
          I_N))|)}{|X|-\log_2(|IS(G(X,\mathcal{R}))|)}\right)
\end{equation*}
\begin{equation*}
	\textrm{gPower} =
        \mathbb{E}\left(\frac{\sigma(\mathcal{R})-\sigma(\mathcal{R}
          \cap I_N)}{|X_A|}\right) = \mathbb{E}\left(\frac{\log_2(|IS(G(X, \mathcal{R} \cap I_N))|)-\log_2(|IS(G(X,\mathcal{R}))|)}{|X_A|}\right)
\end{equation*}
When $\mathcal S$ is the set of singletons we have $\sigma(A)=|A|$ for
every $A\subseteq {\mathcal S}$, which gives the usual definitions,
$\textrm{gFDR} = \mathbb{E}\left(\frac{\sigma(\mathcal{R} \cap
  I_N)}{\sigma(\mathcal{R})}\right) = \mathbb{E}
\left(\frac{|\mathcal{R} \cap I_N|}{|\mathcal{R}|}\right) = \textrm{FDR}$ and
$\textrm{gPower} = \mathbb{E}\left(\frac{\sigma(\mathcal{R})-\sigma(\mathcal{R} \cap
  I_N)}{|X_A|}\right) = \mathbb{E}\left(\frac{|\mathcal{R}|-|\mathcal{R} \cap
  I_N|)}{|X_A|}\right) = \mathbb{E}\left(\frac{|\mathcal{R} \cap
  I_A|}{|X_A|}\right) = \textrm{power}$.

The reader may wonder
why the number of false
discoveries is $\sigma(\mathcal{R}\cap I_N)$ and the number of true
discoveries is $\sigma(\mathcal{R})-\sigma(\mathcal{R}\cap I_N)$.
The subadditivity $\sigma(\mathcal{R}\cap I_N)+\sigma(\mathcal{R}\cap
I_A)-\sigma(\mathcal{R})$ comes from an overlap between the content of
the falsely rejected and correctly rejected hypotheses. The content of
this overlap should all be considered falsely rejected.  For example,
let $\mathcal{R}=\{\{X_1,X_2\},\{X_1,X_3\}\}$. The null hypotheses are
$H_{\{X_1,X_2\}}: \{X_1,X_2\} \cap X_A = \emptyset $ and
$H_{\{X_1,X_3\}}:\{X_1,X_3\} \cap X_A = \emptyset $. The shared
content between these hypotheses is $H_{\{X_1\}}: \{X_1\} \cap X_A =
\emptyset$. If, for example, $H_{\{X_1,X_2\}}$ is in $I_N$ and
$H_{\{X_1,X_3\}}$ is in $I_A$, then the shared content $H_{\{X_1\}}$
is a true null, so its contribution to the sizing function should be
considered as a false discovery. From the information-theoretic point
of view, $\sigma({\mathcal R})-\sigma(\mathcal{R}\cap
I_N)=\log_2(|IS(G_N)|)-\log_2(|IS(G)|)$, where $G_N=(X,\mathcal{R}\cap
I_N)$. This is the negative base-2 logarithm of the probability that a
random independent set for $G_N$ is independent for $G$. That is, it
is the amount of additional information given by rejecting ${\mathcal
  R}\cap I_A$, when we already know that the hypotheses in ${\mathcal
  R}\cap I_N$ are true nulls. Note that because of the subadditivity,
this is the strictest possible way to count the true positive rate.

We now recall the generalized linear step-up procedure (GLSUP) from \citep{SHRED}:
\begin{definition}[Generalized Linear Step-up Procedure]
  Let $H_1,\ldots,H_m$ be a set of null hypotheses with corresponding
  conservative $p$-values $p_1,\ldots,p_m$ and sizing function
  $\sigma:\mathcal{P} \rightarrow \mathbb R_{\geqslant 0}$ where
  $\mathcal{P}$ is the powerset of $\{1,\ldots,m\}$. For any $c \in
  [0,1]$, define $I_c = \{i\in \{1,\ldots,m\}|p_i \leqslant c\}$. We
  reject all hypotheses with indices in
  $\overline{I_{c_{\textrm{max}}}}$ where $c_{\textrm{max}} =
  \sup\left\{c \in [0,1]\middle| \sigma\left(\overline{I_c}\right)
  \geqslant \alpha c\right\}$, where $\overline{I_c}$ is the set of
  all hypotheses that imply hypotheses in $I_c$.
\end{definition}

The slope $\alpha$ is chosen to control the gFDR at the chosen level
$q$ under particular assumptions. For example, for the BH method,
$\alpha=\frac{m}{q}$, and for the BY method,
$\alpha=\frac{m\sum_{j=1}^{m}\frac{1}{j}}{q}$. In \citep{SHRED},
we proved gFDR control under certain assumptions for general choices
of the sizing function $\sigma$. In Section \ref{hvsfdr}, we apply
these results to our hypergraph variable selection method.

\subsection{The Hypergraph Variable Selection (HVS) Procedure}
We now provide the full HVS procedure:
\begin{description}

    \item[Inputs:] The predictor variable set $X$, the response variable $Y$, the data 
    consisting of observations of $(X, Y)$, and the collection 
    $\mathcal{S} \subseteq \mathcal{P}(X)$ of predictor subsets to be tested.

    \item[1.] Fit the full model predicting $Y$ from all variables in $X$.
    
    \item[2.] For each set $S\in{\mathcal S}$, fit the model with the
	variables in $X \backslash S$ and compare to the full model using
	the likelihood ratio test (or another appropriate test) to obtain
	the $p$-values.

      \item[3.] Apply the GLSUP with cut-off $\alpha_{HVS}$ (defined in
	Section~\ref{hvsfdr}) and sizing function $\sigma({\mathcal R}) =
	|X|-\log_2(|IS(G_{\mathcal R})|)$, where $G_{\mathcal
		R}=(X,{\mathcal R})$ is the hypergraph with vertex set $X$, and
	edge set $\mathcal R$.
 \end{description} 

This procedure yields a hypergraph of selected predictor
variables. Any minimal vertex cover of this hypergraph constitutes an
effective set of predictors for modelling the response.  The threshold
$\alpha_{\mathrm{HVS}}$ is chosen to control the gFDR at level
$q$. The value $\alpha_{\mathrm{HVS}}$ depends on the collection
$\mathcal{S}$ of subsets tested, and on the assumptions made about
null $p$-values. In Section~\ref{hvsfdr}, we provide appropriate
values of $\alpha_{\mathrm{HVS}}$ for the case in which all subsets of
size at most two are tested.


\subsection{Counting the Number of Independent Sets}

The sizing function $\sigma$ depends on the number of independent sets
of the associated hypergraph. Thus, in order to apply the GLSUP, we
need to count the number of independent sets in a given
hypergraph. Computing this quantity is NP-hard, so exact algorithms
are not computationally feasible for realistic applications of our
method. We therefore propose the importance-sampling based method in
Algorithm~\ref{CountIS} to approximately count the number of
independent sets. This algorithm is a special case of the general
backtracking approximation algorithm of~\cite{knuth1975}.

\setlength{\algomargin}{0pt} 
\setcounter{AlgoLine}{0}
\begin{algorithm}[htbp]
	\DontPrintSemicolon
	\SetAlgoNlRelativeSize{0} 
	\SetAlgoNlRelativeSize{0}
	\SetAlgoNlRelativeSize{0}
	\setlength{\textwidth}{500pt} 
	\renewcommand{\baselinestretch}{1.2}
	\SetAlgoLined
	\KwIn{Hypergraph \(G = (V, E)\), number of sequences to subsample \(N\)}
	\KwOut{Approximate count of independent sets}
	
	\SetKwFunction{SampleIndependentSequence}{SampleIndependentSequence}
	\SetKwFunction{ComputeWeight}{ComputeWeight}
	
	Initialize \( \text{totalWeight} \leftarrow 0 \)\;
	
	\For{\(i \leftarrow 1\) \KwTo \(N\)}{
		\(w \leftarrow \SampleIndependentSequence(G)\)
                \tcp*[r]{Sample a sequence of independent sets
                  \(S_0\subseteq S_1\subseteq\cdots \subseteq V\)}
		
		\(\text{totalWeight} \leftarrow \text{totalWeight} + w\)\;
	}
	\Return \(\text{estimate} \leftarrow \frac{\text{totalWeight}}{N}\)\;
	\vspace{1em}
	
	\SetKwProg{Fn}{Function}{:}{}
	\Fn{\SampleIndependentSequence{\(G\)}}{
		Initialize \(G' \leftarrow G\), \(w \leftarrow 1\), \(\textrm{total} \leftarrow 1\), \(i \leftarrow 1\)\;
		Remove all vertices with loops (hyperedges with only one vertex) in $G'$\;
		\While{\(V(G') \neq \emptyset\)}{
			Set $w \leftarrow w \times |V(G')|/i$ \;
			Set $i \leftarrow i+1$\;
			Set $\textrm{total}\leftarrow \textrm{total}+w$ \;
			Select a random vertex $v \in V(G')$ and remove $v$ from $V(G')$ and from all elements of $E(G')$ that contain $v$; \;
			Remove all vertices with loops (hyperedges with only one vertex) in $G'$\;
		}
		\Return \(\textrm{total}\)\;
	}	
	\caption{Counting the Number of Independent Sets via Subsampling}\label{CountIS}
\end{algorithm}

Algorithm~\ref{CountIS} uses sequential importance sampling. The
\texttt{SampleIndependentSequence} function samples a sequence of
subsets from a sampling distribution with support only on the
independent sets, where each step in the sequence corresponds to a
single sample. At each step, a vertex is added to the current set,
chosen randomly from the set of vertices that can be added while
retaining independence. The weight $w$ for each sample is calculated
sequentially based on the number of possible choices at each
stage. The hypergraph $G'$ keeps track of which vertices can still be
added at each step. Details of how this works are in
Appendix~\ref{SUPPCountIndepProof}. In practice, we set the number of
sequences to sample to $N = 1000$, which yields sufficiently accurate
approximations while keeping computation tractable, even for
hypergraphs of moderate to large size.

To implement the GLSUP, we can use this method for every cut-off, or
at least at those corresponding to $p$-values of tested hyperedges.
However, we can adapt the algorithm to reduce computation when
approximately counting independent sets across a nested sequence of
hypergraphs defined on the same vertex set.  Consider a sequence
${\mathbb G}=G_0,G_1,\ldots,G_{n_e}$, where $G_i=(V,E_i)$, and the
hyperedge sets form a nested sequence $\emptyset=E_0\subseteq
E_1\subseteq \cdots\subseteq E_{n_e}$ ordered by increasing
$p$-values. For any $i\geqslant j$, every independent set of $G_i$ is
also an independent set of $G_j$.  This allows us to reuse sequences
generated for $G_j$ to estimate $|IS(G_i)|$ when $i\geqslant
j$. Specifically, when using samples from $G_j$ to estimate the number
of independent sets in $G_i$, we assign zero weight to those sampled
sets that are not independent in $G_i$. This still yields an
unbiased estimator for $|IS(G_i)|$.  However, although these reused
samples provide unbiased estimates, they have higher variance than
estimates obtained from samples generated using $G_i$.  We therefore
down-weight the contribution of sequences generated from earlier $G_j$
and use them to partially replace samples that would otherwise be
drawn from $G_i$, thereby reducing the total number of sequences
generated.  The down-weighting factor is given by $\prod_{e_k\in
  E_i\setminus E_j}(1-2^{-|e_k|})$, which is a lower bound on the
proportion of independent sets of $G_j$ that remain independent in
$G_i$. With this adjustment, the total number of sequences of
independent sets of $G_i$ included among our samples by
Algorithm~\ref{CountISModified} should still exceed the desired
threshold $N$.  We implement this in the \texttt{R} package
\texttt{hypergraph.sizing} to create a sizing function for use with
the \texttt{GLSUP} package.

\setlength{\algomargin}{0pt} 
\setcounter{AlgoLine}{0}
\begin{algorithm}[htbp]
	\DontPrintSemicolon
	\SetAlgoNlRelativeSize{0} 
	\SetAlgoNlRelativeSize{0}
	\SetAlgoNlRelativeSize{0}
	\setlength{\textwidth}{500pt} 
	\renewcommand{\baselinestretch}{1.2}
	\SetAlgoLined
	\KwIn{Nested sequence of hypergraphs \({\mathbb G}=(G_i = (V,
          E_i))_{i=0}^{n_e}\), where $E_i=\{e_1,\ldots,e_i\}$, number
          of sequences to subsample \(N\)}
	\KwOut{Approximate counts of independent sets for each $G_i$}
	
	\SetKwFunction{SampleIndependentSequence}{SampleIndependentSequence}
	\SetKwFunction{ComputeWeight}{ComputeWeight}
	
	Initialize \( \textbf{totalWeight} \leftarrow \mathbf{0},\;\textbf{sampleSize} \leftarrow \mathbf{0}, \)\;        
        
	\For{\(i \leftarrow 1\) \KwTo \(n_e\)}{
          \textrm{sampleSize}$_{i}\leftarrow$\textrm{sampleSize}$_{i-1}$\(*(1-2^{-|e_i|})\)\;
          \For{\(j\leftarrow i\) \KwTo \(n_e\)}{
            \textrm{totalWeight}$_{j}\leftarrow$\textrm{totalWeight}$_{j}$\(*(1-2^{-|e_i|})\)\;
          }
	  \While{\textrm{sampleSize}$_{i}$ \(<N\)}{
		\(\mathbf{w} \leftarrow
                \SampleIndependentSequence({\mathbb G},i)\)
                \tcp*[r]{Sample a sequence of independent sets \(S
                  \subseteq V\) for $G_i$, and use them to estimate
                  $|IS(G_j)|$ for all $j\geqslant i$.}
		
		\(\textbf{totalWeight} \leftarrow \textbf{totalWeight}
                + \mathbf{w}\)\;
                \(\textrm{sampleSize}_{i}\leftarrow \textrm{sampleSize}_{i}+1\)\;
	  }
        }
	\Return \(\text{estimate} \leftarrow \frac{\textbf{totalWeight}}{\textbf{sampleSize}}\)\;
	\vspace{1em}
	
	\SetKwProg{Fn}{Function}{:}{}
	\Fn{\SampleIndependentSequence{\({\mathbb G},i\)}}{
		Initialize \(G' \leftarrow G_i\), \(w \leftarrow 1\),
                \(\textrm{total}_k \leftarrow {1}_{k\geqslant i}\), \(j \leftarrow
                1\), \(i_{\textrm{max}}\leftarrow n_e\)\;
		Remove all vertices with loops (hyperedges with only one vertex) in $G'$\;
		\While{\(V(G') \neq \emptyset\)}{
			Set $w \leftarrow w \times |V(G')|/j$ \;
			Set $j \leftarrow j+1$\;
			Select a random vertex $v \in V(G')$ and remove $v$ from $V(G')$ and from all elements of $E(G')$ that contain $v$; \;
			Remove all vertices with loops (hyperedges
                        with only one vertex) in $G'$\;
                        \For{\(k \leftarrow i+1\); 
                          \(k\leqslant i_{\textrm{max}}\)}{
			  \If{$v$ makes the set not independent for
                            $G_k$}{
                            $i_{\textrm{max}}\leftarrow k-1$\;
                            }
                        }                        
                        \For{\(k \leftarrow i\)
                          \KwTo\(i_{\textrm{max}}\)}{
			  Set $\textrm{total}_{k}\leftarrow \textrm{total}_{k}+w$ \;
                        }
		}
		\Return \(\textbf{total}\)\;
	}	
	\caption{Counting Independent Sets for a Nested
          Sequence of Hypergraphs}\label{CountISModified}
\end{algorithm}

Even the modified Algorithm~\ref{CountISModified} remains
computationally expensive. Since the sizing function only needs to be
accurately estimated near to the final cut-off, we develop
Algorithm~\ref{alg_find_cutoff}, which focuses on cut-off values close
to the chosen final cut-off. Starting from a collection of rough
estimates obtained using Algorithm~\ref{CountIS} with $N=1$ on a batch
of cut-off values, we apply isotonic regression to estimate the number
of independent sets for each cut-off. We use these estimates to select
a threshold cut-off, and concentrate the next batch of samples near to
this threshold. After each batch, we perform isotonic regression on
all samples collected so far to estimate the number of independent
sets for each cut-off, and use the results to update our threshold
estimate, which is the centre for the next batch of cut-off
values. Algorithm~\ref{alg_find_cutoff} is implemented in the
\texttt{HVS} package.

\setlength{\algomargin}{0pt} 
\setcounter{AlgoLine}{0}
\begin{algorithm}[htbp]
  \DontPrintSemicolon
  \SetAlgoNlRelativeSize{0} 
  \SetAlgoNlRelativeSize{0}
  \SetAlgoNlRelativeSize{0} \setlength{\textwidth}{500pt} 
  \renewcommand{\baselinestretch}{1.2} \SetAlgoLined
  \KwIn{Hypergraph \(G = (V, E)\), $p$-values $p_i$ for each
    edge, number of cut-offs per batch \(N\), threshold slope
    $\alpha$, number of batches $B$} \KwOut{Cut-off $c$ such
    that $|V|-\log_2(IS(G_c))=\alpha c$ }
  
  \SetKwFunction{SampleIndependentSequence}{SampleIndependentSequence}
  \SetKwFunction{IsotonicRegression}{IsotonicRegression}
  \SetKwFunction{ConvexHull}{ConvexHull}
  
  Select $c_1\leqslant\cdots\leqslant c_{N}$,
  evenly spaced between 0 and $q$ \tcp*[r]{solution must have
    $c\leqslant q$}
  \For{\(i \leftarrow 1\) \KwTo \(N\)}{
    \(w_i \leftarrow \SampleIndependentSequence(G_{c_i})\)
    \tcp*[r]{Function from Algorithm~\ref{CountIS}}
  }

  $\mathbf{s}\leftarrow\IsotonicRegression(\mathbf{w})$\;
  $\sigma_i\leftarrow |V|-\log_2(s_i)$ \tcp*[r]{Sizing function} 
  Find largest solution $i_0$ to $\sigma_i=\alpha c_i$
  \tcp*[r]{$\sigma$ a step function, $c_i$ linearly
    interpolated, so $i_0$ not necessarily an integer} 
  \For{$b\leftarrow 1$ \KwTo $B$}{
    Select $N$ cut-off values
    $\mathbf{c}_b=c_{b,1},\ldots,c_{b,N}$ where $c_{b,1}\leqslant\cdots\leqslant
    c_{b,N}$ centred around $c_{i_{b-1}}$\;
    \tcp*[r]{These cut-off values should be from the set of
      $p$-values. Add some randomness to the choice of cut-off
      values, so that the same values do not get repeated.}
    \For{\(i \leftarrow 1\) \KwTo \(N\)}{
      \(w_{b,i} \leftarrow
      \SampleIndependentSequence(G_{c_{b,i}})\) \tcp*[r]{Fn from
        Algorithm~\ref{CountIS}}
    }
    Merge $\mathbf{c}_b,\mathbf{w}_b$ into
    $(\mathbf{c},\mathbf{w})$, so that the merged $\mathbf{c}$ is in
    increasing order.
    $\mathbf{s}\leftarrow\IsotonicRegression(\mathbf{w})$\tcp*[r]{All samples from batches 1 to $b$}
    $\sigma_i\leftarrow |V|-\log_2(s_i)$ \tcp*[r]{Sizing function} 
    Find largest solution $i_b$ to $\sigma_i=\alpha c_i$
    \tcp*[r]{Not necessarily an integer}          
  }
  
  \SetKwProg{Fn}{Function}{:}{}
  \Fn{\IsotonicRegression{\(\mathbf{w}\)}}{
    $\kappa_i\leftarrow \sum_{j=1}^i w_j$\tcp*[r]{Cumulative Sum}
    $S\leftarrow\ConvexHull(\mathbf{i},\boldsymbol{\kappa})$\;
    $s_i\leftarrow\frac{\kappa_{S^+[i]}-\kappa_{S^{-}[i]}}{S^+[i]-S^{-}[i]}$\;
  }
  \Fn{\ConvexHull{\(\mathbf{x},\mathbf{y}\)}}{
    $m\leftarrow\texttt{length}(\mathbf{x})/2$\;
    $H_1\leftarrow\ConvexHull(\{x_i|i\leqslant m\},\{y_i|i\leqslant
    m\})$\;
    $H_2\leftarrow\ConvexHull(\{x_i|i\geqslant m\},\{y_i|i\geqslant
    m\})$\;
    Concatenate $H_1$ and $H_2$.\;
    Remove all non-convex points from the middle.\; 
  }
  \caption{Choosing the cut-off for GLSUP.}\label{alg_find_cutoff}
\end{algorithm}

To confirm that this improved computational efficiency does not
significantly affect performance, we compare
Algorithms~\ref{CountISModified} and~\ref{alg_find_cutoff} on several
simulation scenarios from Section~\ref{hvssim}. The results are in
Table~\ref{SUPPCompareAlgorithms} in Supplementary
Appendix~\ref{SUPPAppendixAlgorithms}.  There is no significant
difference between the algorithms in terms of any of the performance
metrics, but that Algorithm~\ref{alg_find_cutoff} is much faster.

\section{FDR Control for Hypergraph Variable Selection}\label{hvsfdr}

In this section, we set
$\mathcal{S} \subseteq \mathcal{P}(X)$ as the collection of all subsets 
of cardinality at most~$2$. Sets of size greater than~$2$ are rarely 
selected in practice, so including them yields minimal gains in power. 
Under this construction, for $|X|=p$ predictor variables, we have
$m=|\mathcal{S}|
= {p \choose 2} + p$.

\citet{SHRED} derive cut-offs for the GLSUP that control the gFDR under 
two regimes: (i) without any dependency assumptions on the $p$-values, 
analogous to the classical BY procedure, and (ii) under the assumption 
that the $p$-values satisfy the PRDS condition on the true nulls, as in 
the BH procedure~\citep{BY}. We recall the relevant results from 
\citet{SHRED} below. The GLSUP uses the following input:

\begin{enumerate}
	
\item A set $H_1,\ldots,H_m$ of null hypotheses with
  corresponding conservative $p$-values $P_1,\ldots,P_m$, which are
  random variables.
	
\item A given collection $\mathcal{C}\subseteq{\mathcal P}$ of
  ``rejectable'' subsets of $\{1,\ldots,m\}$, that is, sets of
  hypotheses that can be consistently rejected.  We assume ${\mathcal
    C}$ is closed under arbitrary intersections. This means that there
  is a corresponding closure operator that sends any $J\subseteq
  \{1,\ldots,m\}$ to the smallest set $\overline{J}\in{\mathcal C}$
  that contains $J$.
	
\item A sizing function $\sigma:{\mathcal
  P}\longrightarrow{\mathbb R}_{\geqslant 0}$, which is (not
  necessarily strictly) increasing with respect to set inclusion.
  
\end{enumerate}

With this setup,  \citet{SHRED} prove the following theorems:

\begin{theorem}[\citet{SHRED}, Theorem~1]\label{GeneralizedBY}
	Suppose that for any
	$J\subseteq\{1,\ldots,m\}$, $\sigma\left(\overline{J}\right)\leqslant \sum_{i\in
		J}w_i$ for some weights $w_1,\ldots,w_m$. Then the generalized
	linear step-up procedure with cut-off
	$$\alpha=\frac{\left(\sum_{i\in I_N}w_i\right)\left(1+\log(\sigma(\{1,\ldots,m\}))\right)-\sum_{i\in
			I_N}w_i\log(\sigma_i)}{q}$$ where
	$\sigma_i=\sigma(\{i\})$, controls the gFDR at level $q$.
\end{theorem}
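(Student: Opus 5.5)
We follow the Benjamini--Yekutieli strategy, adapted to the sizing function. Write $\mathcal R=\overline{I_{c_{\max}}}$ for the rejected set and $R=\sigma(\mathcal R)$ for its size. By the definition of $c_{\max}$ (and right-continuity of the step function $c\mapsto\sigma(\overline{I_c})$), the rejection rule satisfies $\sigma(\mathcal R)\geqslant \alpha c_{\max}$. Every true null $i\in\mathcal R$ therefore has $p_i\leqslant c_{\max}\leqslant R/\alpha$.

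The false discoveries are controlled by the weight hypothesis. The false part $\mathcal R\cap I_N$ is contained in $\overline{J}$, where $J=\{i\in I_N: p_i\leqslant c_{\max}\}$. Hence, by monotonicity and the assumption,
\[
\sigma(\mathcal R\cap I_N)\leqslant \sigma(\overline J)\leqslant \sum_{i\in I_N} w_i\,\mathbf 1\{p_i\leqslant R/\alpha\}.
\]
Rejecting $i$ forces $\overline{\{i\}}\subseteq\mathcal R$, so $R\geqslant\sigma_i$. Also $R\leqslant \sigma(\{1,\ldots,m\})=:\sigma_{\max}$. Combining these,
\[
\mathrm{gFDR}\leqslant \sum_{i\in I_N} w_i\,\mathbb E\!\left[\frac{\mathbf 1\{p_i\leqslant R/\alpha,\ R\geqslant \sigma_i\}}{R}\right].
\]

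For each fixed $i$, we bound this expectation with no dependence assumptions, using the BY layer-cake trick. Write
\[
\frac{1}{R}=\frac{1}{\sigma_{\max}}+\int_R^{\sigma_{\max}} t^{-2}\,dt .
\]
Then
\[
\frac{\mathbf 1\{p_i\leqslant R/\alpha,\ R\geqslant\sigma_i\}}{R}\leqslant \frac{\mathbf 1\{p_i\leqslant \sigma_{\max}/\alpha\}}{\sigma_{\max}}+\int_{\sigma_i}^{\sigma_{\max}} \frac{\mathbf 1\{p_i\leqslant t/\alpha\}}{t^{2}}\,dt .
\]
This holds because on the event $R\leqslant t$ we have $p_i\leqslant R/\alpha\leqslant t/\alpha$, and $R\geqslant\sigma_i$ restricts the range of integration. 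Taking expectations and using only the conservative null property $\Pr(p_i\leqslant u)\leqslant u$ gives
\[
\frac{1}{\alpha}+\int_{\sigma_i}^{\sigma_{\max}}\frac{1}{\alpha t}\,dt=\frac{1+\log\sigma_{\max}-\log\sigma_i}{\alpha}.
\]
Summing over $i\in I_N$ with weights $w_i$ yields
\[
\mathrm{gFDR}\leqslant\frac{\bigl(\sum_{i\in I_N}w_i\bigr)(1+\log\sigma_{\max})-\sum_{i\in I_N}w_i\log\sigma_i}{\alpha}=q
\]
for the stated $\alpha$.

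The main obstacle is the first step: justifying that each falsely rejected $i$ really satisfies $p_i\leqslant R/\alpha$ when $\mathcal R$ includes implicitly rejected hypotheses in the closure. Such hypotheses need not have small $p$-values themselves. The fix is to bound $\sigma(\mathcal R\cap I_N)$ by $\sigma(\overline J)$ using only the explicitly small null $p$-values $J$, which is exactly where the hypothesis $\sigma(\overline J)\leqslant\sum_{i\in J}w_i$ enters.

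A secondary check is that $c_{\max}$ is attained, so that $\sigma(\mathcal R)\geqslant\alpha c_{\max}$ holds as a genuine inequality. This follows because $I_c$ only changes at $p$-values and $\sigma(\overline{I_c})$ is right-continuous in $c$. One must also handle the case $R=0$ by the convention $0/0=0$, and ensure $\sigma_i>0$ so that the logarithm is defined.
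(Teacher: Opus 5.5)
Your proof is correct. The paper does not prove this statement itself; it is quoted from \citet{SHRED}, so there is no in-paper argument to compare with. What you give is the natural continuous-$\sigma$ analogue of the Benjamini--Yekutieli proof: the harmonic sum $\sum_{k\leqslant m}1/k$ becomes $1+\int_{\sigma_i}^{\sigma_{\max}}t^{-1}\,dt$. It reproduces the cut-off exactly, as $\sum_{i\in I_N}w_i\bigl(1+\log(\sigma_{\max}/\sigma_i)\bigr)$.

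Three steps deserve an explicit line.

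\emph{The containment $\mathcal R\cap I_N\subseteq\overline J$, which you assert without proof.} Suppose $h\in\mathcal R\cap I_N$. Then $h$ implies some $j\in I_{c_{\max}}$. A true null can only imply true nulls, so $j\in I_N$, hence $j\in J$ and $h\in\overline J$. In the hypergraph case this is just the implication $A\subseteq B$, $B\cap X_A=\emptyset\Rightarrow A\cap X_A=\emptyset$. The argument relies on the closure being implication-based ("all hypotheses that imply hypotheses in $I_c$"), not merely on $\mathcal C$ being intersection-closed. For an arbitrary intersection-closed $\mathcal C$ the containment can fail, so it is a genuine ingredient and should be stated.

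\emph{The inequality $\sigma(\overline{I_{c_{\max}}})\geqslant\alpha c_{\max}$.} This follows from monotonicity alone, by taking $c_n\uparrow c_{\max}$ in the defining set. Right-continuity is not what is needed here.

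\emph{Nonnegativity of the weights.} You need $w_i\geqslant 0$ when you replace $\mathbf 1\{i\in J\}$ by the larger indicator. This follows from the hypothesis applied to $J=\{i\}$, since $\sigma\geqslant 0$.
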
	

\begin{definition}[\citet{BY}]
	A random vector $X = (X_1,\ldots X_m) \in [0,1]^m$ is PRDS on $J
	\subseteq \{1,\ldots,m\}$ if for every upward-closed set $D
	\subseteq [0,1]^{m-1}$, $\forall i \in J$, $\pr(X_{\hat{i}} \in
	D|X_i = x)$ is non-decreasing in $x \in [0,1]$, where $X_{\hat{i}}$ denotes the vector $X$ with its $i$th element removed.
\end{definition}

\begin{theorem}[\citet{SHRED}, Theorem~2]\label{PRDS}
	Suppose that the conservative $p$-values $P_1,\ldots,P_m$ satisfy
	the PRDS condition on the set $I_N$ of true null hypotheses, and
	that the sizing function $\sigma:{\mathcal
		P}\longrightarrow{}{\mathbb R}_{\geqslant 0}$ is bounded by a sum
	of weights: $(\forall
	J\subseteq\{1,\ldots,m\})\left(\sigma\left(\overline{J}\right)\leqslant
	\sum_{i\in J}w_i\right)$ for some weights $w_i\geqslant 0$. Then the
	generalized linear step-up method with cut-off
	$\alpha=\frac{\sum_{i=1}^mw_i}{q}$ controls the gFDR at level $q$.
\end{theorem}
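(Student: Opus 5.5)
The plan is to follow the Benjamini--Yekutieli argument for BH under PRDS, with the sizing function taking the place of the rejection count. There are three steps. First, bound the false-discovery numerator by a weighted count of true nulls with small $p$-values. Second, bound the denominator from below using the step-up condition. Third, apply a self-consistency inequality under PRDS to each true null separately.

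\textbf{Step 1 (the rejection point).} Write $C=c_{\textrm{max}}$ and $\mathcal R=\overline{I_C}$. The function $c\mapsto\sigma(\overline{I_c})$ is a nondecreasing, right-continuous step function that jumps only at the observed $p$-values, and $c\mapsto\alpha c$ is increasing. Hence the supremum defining $C$ is attained, so $\sigma(\mathcal R)\geqslant\alpha C$ whenever $\mathcal R\neq\emptyset$; the gFDR ratio is set to $0$ when nothing is rejected. Moreover, lowering any single $p$-value can only enlarge every $I_c$. Since $\sigma$ is monotone, $C$ is a coordinatewise nonincreasing function of $(P_1,\ldots,P_m)$.

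\textbf{Step 2 (numerator).} Suppose $j\in\mathcal R\cap I_N$. Then $H_j$ implies some $H_i$ with $i\in I_C$. Because $H_j$ is true and $H_j\Rightarrow H_i$, $H_i$ is also true, so $i\in I_C\cap I_N$. This gives $\mathcal R\cap I_N\subseteq\overline{I_C\cap I_N}$. Monotonicity of $\sigma$ and the weight hypothesis then yield
\[
\sigma(\mathcal R\cap I_N)\leqslant\sum_{i\in I_N}w_i\,\mathbf 1\{P_i\leqslant C\}.
\]
Combining this with Step 1 gives
\[
\textrm{gFDR}\leqslant\frac{1}{\alpha}\sum_{i\in I_N}w_i\,\mathbb E\left[\frac{\mathbf 1\{P_i\leqslant C\}}{C}\right].
\]

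\textbf{Step 3 (self-consistency under PRDS).} This is the main step. We must show $\mathbb E[\mathbf 1\{P_i\leqslant C\}/C]\leqslant 1$ for every $i\in I_N$. Here $P_i$ is superuniform, $C$ is a nonincreasing function of the $p$-value vector, and the vector is PRDS on $I_N$. I would invoke the Blanchard--Roquain self-consistency lemma, which is the continuous analogue of the BY lemma. Its proof writes
\[
\frac{1}{C}=\int_0^\infty t^{-2}\,\mathbf 1\{C\leqslant t\}\,dt .
\]
One then notes that $\{C\leqslant t\}$ is an upward-closed event in the $p$-values. PRDS therefore gives $\Pr(C\leqslant t\mid P_i\leqslant u)\leqslant\Pr(C\leqslant t\mid P_i\leqslant t)$ for $u\leqslant t$. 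The remaining integral telescopes using $\Pr(P_i\leqslant u)\leqslant u$.

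I expect the main obstacle to be handling the fact that $C$ is a continuous rather than a discrete threshold. This requires passing from the conditioning on $P_i=x$ in the PRDS definition to conditioning on $\{P_i\leqslant u\}$, which is a standard averaging argument. It also requires care with the event $C=0$. Once Step 3 is in place, the bound becomes $\textrm{gFDR}\leqslant\sum_{i\in I_N}w_i/\alpha\leqslant\sum_{i=1}^m w_i/\alpha=q$ for $\alpha=\sum_{i=1}^m w_i/q$, using $w_i\geqslant 0$.
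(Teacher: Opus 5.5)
\paragraph{Verdict.} Your proof is correct. The paper gives no proof of this statement to compare against: it quotes the result from \citet{SHRED} and uses it as a black box. What you give is a self-contained proof along standard Benjamini--Yekutieli lines. Its key structural step is the inclusion $\mathcal R\cap I_N\subseteq\overline{I_C\cap I_N}$ in Step 2. That inclusion lets you charge false discoveries only to true nulls whose own $p$-values fall below the threshold.

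\paragraph{Remarks.} Three points are worth recording.

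\emph{The slope.} Step 2 actually proves control with the smaller slope $\alpha=\sum_{i\in I_N}w_i/q$. This matters here, because Corollary~\ref{HVSprds} uses exactly that slope: it sums weights only over the $p_0$ noise loops and the $\binom{p_0}{2}$ noise pairs. A larger slope is more conservative, so the theorem as literally quoted, with $\sum_{i=1}^m w_i$, does not by itself yield that corollary. Your version does.

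\emph{The threshold is discrete.} The continuous-threshold obstacle you anticipate does not arise. Suppose $C<1$. Then $I_c=I_C$ for $c$ slightly above $C$, and $\sigma(\overline{I_c})<\alpha c$ there. Letting $c\downarrow C$ and combining with attainment gives $\sigma(\overline{I_C})=\alpha C$. So $C$ lies in the fixed finite set $\{\sigma(\overline J)/\alpha : J\subseteq\{1,\ldots,m\}\}\cup\{1\}$. The discrete BY telescoping over this grid, together with $\Pr(P_i\le v_k)\le v_k$, then finishes Step 3 without the integral representation. The instance of PRDS monotonicity needed there is $\Pr(C\le v_{k-1}\mid P_i\le v_{k-1})\le\Pr(C\le v_{k-1}\mid P_i\le v_k)$. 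This puts the event threshold at the lower conditioning level rather than the upper one you wrote, but it follows from the same fact.

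\emph{The PRDS definition used here.} The paper's definition conditions only the law of $P_{\hat i}$, while the event $\{C\le t\}$ also depends on $P_i$. You therefore need one more line. For $x\le x'$, the section $D_x=\{p_{\hat i}: C(p_{\hat i},x)\le t\}$ is upward closed and contained in $D_{x'}$, because $C$ is nonincreasing in $p_i$. Hence
\[
\Pr(P_{\hat i}\in D_x\mid P_i=x)\le\Pr(P_{\hat i}\in D_{x'}\mid P_i=x)\le\Pr(P_{\hat i}\in D_{x'}\mid P_i=x').
\]
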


The key assumption in the preceding two theorems is that the sizing
function $\sigma : \mathcal{P} \to \mathbb{R}_{\geqslant 0}$ satisfies
an additive bound of the form
\[
(\forall\, J \subseteq \{1,\ldots,m\}) \ \left(
\sigma\!\left(\overline{J}\right) \leqslant \sum_{i \in J} w_i\right)
\]
for some nonnegative weights $w_i \geqslant 0$. Next we show that the
sizing function introduced in Section~\ref{hvsvarselection} satisfies
this additive bound condition.

\begin{lemma}[Additive bound for $\sigma$]\label{boundsigma}
Let $G=(X,S)$ be a hypergraph and define
\[
\sigma(S)=|X|-\log_2|IS(G)|.
\]
For any $S$, we have 
\[
\sigma(S)\le \sum_{i\in S}\sigma_i,
\quad\text{where }\sigma_i = k - \log_2(2^k-1)\text{ for an edge $i$ of size $k$}.
\]
\end{lemma}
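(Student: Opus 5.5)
The bound is equivalent, after exponentiating, to
\[
\frac{|IS(G)|}{2^{|X|}} \;\geqslant\; \prod_{i\in S}\left(1-2^{-|e_i|}\right),
\]
since $2^{-\sigma_i}=(2^k-1)/2^k=1-2^{-k}$ for an edge $e_i$ of size $k$. We read the left-hand side probabilistically. Let $U$ be a uniformly random subset of $X$, so each vertex is included independently with probability $1/2$. Then $|IS(G)|/2^{|X|}=\pr(U\in IS(G))$, and $U\in IS(G)$ exactly when, for every edge $e_i$, the event $A_i=\{e_i\not\subseteq U\}$ holds. Each $A_i$ has probability $1-2^{-|e_i|}$. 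The claim therefore reduces to
\[
\pr\Big(\bigcap_{i\in S}A_i\Big)\geqslant\prod_{i\in S}\pr(A_i).
\]

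We obtain this from the Harris--FKG inequality on the product measure $\{0,1\}^X$. Each event $A_i$ is decreasing: removing vertices from $U$ cannot make $e_i$ become a subset of $U$. The Harris inequality states that decreasing events are positively correlated under a product measure. Applied iteratively, using that an intersection of decreasing events is again decreasing, it gives the product lower bound. Taking $-\log_2$ of both sides, and noting that $-\log_2$ reverses the inequality, yields $|X|-\log_2|IS(G)|\le\sum_{i\in S}\sigma_i$.

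To keep the paper self-contained, we can prove the needed Harris inequality by induction on $|X|$. If $f$ and $g$ are nonincreasing functions on $\{0,1\}^n$ with the product measure, then $\mathbb E[fg]\geqslant\mathbb E f\,\mathbb E g$. The base case $n=1$ is the elementary rearrangement inequality. For the inductive step, condition on the last coordinate, apply the inductive hypothesis to each conditional expectation, and then apply the $n=1$ case to the resulting conditional means, which are again nonincreasing.

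The only step needing care is this correlation inequality; everything else is bookkeeping. A direct alternative is induction on the number of edges: adding an edge $e$ to $G$ removes from $IS(G)$ those independent sets containing $e$. We must show that at most a $2^{-|e|}$ fraction of $IS(G)$ contains $e$. That fraction statement is itself exactly the Harris inequality, so the FKG route is the cleanest.
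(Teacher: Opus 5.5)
Your proof is correct, but it takes a different route from the paper's. The paper uses the ``direct alternative'' you set aside at the end: induction on the number of edges. It removes an edge $e$ of size $k$ to get $G'$ and splits $IS(G')$ into $2^k$ classes according to $e\setminus U$. For each $A\subseteq e$, the map $U\mapsto U\setminus A$ sends the class of independent sets containing $e$ injectively into the class with $e\setminus U=A$. The map is well defined because independent sets are closed under taking subsets, and injective because $U=(U\setminus A)\cup A$. So the class lost when $e$ is added is no larger than any of the $2^k$ classes, which gives $|IS(G)|\geqslant(1-2^{-k})|IS(G')|$.

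Your remark that this fraction statement ``is itself exactly the Harris inequality'' is therefore somewhat misleading. It is only the special case of a down-set against the principal up-set $\{U : e\subseteq U\}$. For that case the injection above is a two-line proof, with no induction over coordinates and no conditional expectations.

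As a result, the paper's route is shorter and fully self-contained, using nothing beyond downward closure of $IS(G')$. Your route buys a conceptual reading instead: $2^{-\sigma(S)}$ is the probability that a uniformly random vertex set is independent, and the product bound becomes an instance of a standard correlation inequality. That inequality would apply unchanged to any family of decreasing constraints, not just edge constraints.
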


\begin{proof}
We use induction on $|S|$. The case $S=\emptyset$ is immediate since 
$\sigma(\emptyset)=0$.

Let $S'=S\setminus\{i\}$ and let edge $i$ contain $k\ge1$ vertices.  
By definition, an independent set for $G$ is an independent set of
$G'=(X,S')$, i.e. 
\[
IS(G)\subseteq IS(G').
\]
We partition $IS(G')$ into $2^k$ disjoint classes according to the
intersection of each independent set with edge $i$. Let these classes
be denoted $IS(G')_A=\{U\in IS(G')|i\setminus U=A\}$. Exactly one class,
$IS(G')_\emptyset$ corresponds to independent sets that contain all
vertices in edge $i$. Since an independent set of $G'$ that does not
contain all the vertices of edge $i$ is an independent set of $G$, we
have $IS(G)=IS(G')\setminus IS(G')_\emptyset$. Now for any $U\in
IS(G')_\emptyset$, and any $A\subseteq i$, we have $U\setminus A\in
IS(G')_A$. Thus, $|IS(G')_\emptyset|\leqslant |IS(G')_A|$, and
therefore
\[
|IS(G)| \geqslant \frac{2^k-1}{2^k}\,|IS(G')|
\]

Thus
\[
\begin{aligned}
\sigma(S)
&= |X|-\log_2|IS(G)| \\
&\le |X|-\log_2\!\left(\tfrac{2^k-1}{2^k}|IS(G')|\right) \\
&= \sigma(S') + \log_2\!\left(\tfrac{2^k}{2^k-1}\right)
= \sigma(S') + \sigma_i.
\end{aligned}
\]
By induction, $\sigma(S')\le\sum_{j\in S'}\sigma_j$, so
\[
\sigma(S)\le \sum_{j\in S}\sigma_j.
\]
\end{proof}

For the specific case of the HVS procedure with $\mathcal{S} = \{S_i \subset
\mathcal{P}(X): |S_i| \leqslant 2\}$, Theorem~\ref{GeneralizedBY}
and Lemma~\ref{boundsigma} yield the following result.

\begin{corollary} \label{HVSarb}
	Let $\mathcal{S} = \{S_i \subset \mathcal{P}(X) : |S_i| \le 2\}$ be the collection of all singletons and all pairs of variables, and define the sizing function by $\sigma(S) =
	|X|-\log_{2}(|IS(G)|)$, where $IS(G)$ denotes the set of
        independent sets of the hypergraph $G = (X, S)$. Then the GLSUP with
	\begin{equation*}
		\alpha_{HVS} = \frac{1}{q}\left[p_0(1 + \log(p))+{p_0 \choose 2}\frac{\log(4)-\log(3)}{\log(2)}\left(1+\log(p) - \log\left(\frac{\log(4) - \log(3)}{\log(2)} \right) \right)\right]
	\end{equation*}
	where $p_0=|X_N|$ is the number of noise variables and $p=|X|$, controls the gFDR at level $q$.
\end{corollary}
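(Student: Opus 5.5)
We obtain the corollary by plugging Lemma~\ref{boundsigma} into Theorem~\ref{GeneralizedBY} and evaluating the resulting expression for $\mathcal{S}$ consisting of all singletons and pairs.

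\textbf{Step 1: choice of weights.} We take $w_i=\sigma_i=k-\log_2(2^k-1)$ for an edge of size $k$. For a singleton ($k=1$) this gives $w_i=1$. For a pair ($k=2$) it gives
\[
w_i=2-\log_2 3=\frac{\log 4-\log 3}{\log 2}.
\]
The closure $\overline{J}$ adds only hypotheses implied by those in $J$, i.e. supersets of rejected sets, and these do not change the independent sets. Hence $\sigma(\overline J)=\sigma(J)$, and Lemma~\ref{boundsigma} gives $\sigma(\overline J)\le\sum_{i\in J}w_i$. This is exactly the hypothesis of Theorem~\ref{GeneralizedBY}. Note also that $\sigma_i=\sigma(\{i\})$ coincides with $w_i$, because a single edge of size $k$ removes exactly a $2^{-k}$ fraction of the subsets of $X$.

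\textbf{Step 2: identify the true nulls and the total size.} A hypothesis $H_S$ is a true null exactly when $S\subseteq X_N$. So $I_N$ consists of the $p_0$ noise singletons, each with weight $1$, and the $\binom{p_0}{2}$ noise pairs, each with weight $c:=(\log4-\log3)/\log2$. Therefore
\[
\sum_{i\in I_N}w_i=p_0+\binom{p_0}{2}c.
\]
For $\sigma(\{1,\ldots,m\})$: rejecting every tested hypothesis includes all singletons, so the only independent set is $\emptyset$. Hence $\sigma(\{1,\ldots,m\})=|X|-\log_2 1=p$, and $\log\sigma(\{1,\ldots,m\})=\log p$.

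\textbf{Step 3: substitute.} Theorem~\ref{GeneralizedBY} gives
\[
\alpha=\frac{1}{q}\Bigl[\bigl(p_0+\tbinom{p_0}{2}c\bigr)(1+\log p)-\sum_{i\in I_N}w_i\log\sigma_i\Bigr].
\]
The singleton terms contribute $1\cdot\log 1=0$, and the pair terms contribute $\binom{p_0}{2}c\log c$. Regrouping yields
\[
\alpha=\frac{1}{q}\Bigl[p_0(1+\log p)+\tbinom{p_0}{2}c\,(1+\log p-\log c)\Bigr],
\]
which is exactly $\alpha_{HVS}$.

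The only step needing care is the closure point in Step 1: we must confirm that the closure operator of the GLSUP does not change the sizing function, so that the bound of Lemma~\ref{boundsigma} applies to $\sigma(\overline J)$ as required. Everything after that is direct arithmetic.
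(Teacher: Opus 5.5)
Your proposal is correct and follows the paper's own argument. You combine Lemma~\ref{boundsigma} with Theorem~\ref{GeneralizedBY}, read off $\sigma_i=1$ for loops and $\sigma_i=(\log 4-\log 3)/\log 2$ for pairs, count $p_0$ and $\binom{p_0}{2}$ true-null hypotheses, and use $\sigma(\{1,\ldots,m\})=p$. You also check explicitly that the closure only adds supersets of edges, so $\sigma(\overline J)=\sigma(J)$, and that $\sigma(\{i\})=w_i$; the paper leaves both of these details implicit.
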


This follows from the observation that under the hypergraph-based
hypothesis structure defined by $\mathcal{S} \subseteq \mathcal{P}(X)$
with subsets of cardinality at most $2$, we have
\[
\sigma_i =
\begin{cases}
1, & \text{if $i$ is a loop}, \\[4pt]
\dfrac{\log(4)-\log(3)}{\log(2)}, & \text{otherwise}.
\end{cases}
\]

We also know that $\sigma(\{1,\ldots,m\}) = p$. The index set $I_N$ consists of all loops at true nulls along with all edges between them. Thus $I_N$ contains $p_0$ hypotheses with $\sigma_i = 1$ and ${p_0 \choose 2}$ hypotheses with $\sigma_i = \frac{\log(4)-\log(3)}{\log(2)}$.

Although the PRDS condition is difficult to verify for most test statistics, it is commonly assumed in practice. Under PRDS, Theorem~\ref{PRDS} provides a smaller cutoff $\alpha$.

\begin{corollary}\label{HVSprds}
	Let $\mathcal{S} = \{S_i \subset \mathcal{P}(X) : |S_i| \le 2\}$ be the collection of all singletons and all pairs of variables, with the sizing function given by $\sigma(S) =
	|X|-\log_{2}(|IS(G)|)$.
 If the $p$-values satisfy the PRDS condition on $I_N$, the
	GLSUP with,
	\begin{equation*}
		\alpha_{HVS_{PRDS}} = \frac{1}{q}\left(p_0 + {p_0 \choose 2} \frac{\log(4) - \log(3)}{\log(2)}\right)
	\end{equation*}
	where $p_0$ is the number of noise variables, controls the gFDR at level $q$.
\end{corollary}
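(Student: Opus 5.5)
The plan is to apply Theorem~\ref{PRDS} directly, with the weights supplied by Lemma~\ref{boundsigma}, and then compute the resulting cut-off.

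\textbf{Additive bound.} Set $w_i=\sigma_i=k-\log_2(2^k-1)$ for a hyperedge $i$ of size $k$. For loops this gives $w_i=1$. For pairs it gives $w_i=2-\log_2 3=\frac{\log(4)-\log(3)}{\log(2)}$. Lemma~\ref{boundsigma} yields $\sigma(J)\le\sum_{i\in J}w_i$ for every $J$. Theorem~\ref{PRDS} needs this bound for the closure $\overline{J}$, not just for $J$.

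\textbf{Closure.} Here the closure adds the hypotheses implied by rejecting $J$, i.e.\ all tested supersets of edges in $J$. Adding a superset $B\supseteq A$ of an existing edge $A$ does not change the independent sets: any set containing $B$ already contains $A$. Hence $IS(G(X,\overline{J}))=IS(G(X,J))$, so $\sigma(\overline{J})=\sigma(J)\le\sum_{i\in J}w_i$. All weights are nonnegative, so the hypothesis of Theorem~\ref{PRDS} holds.

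\textbf{Which sum is meant.} As literally stated, Theorem~\ref{PRDS} gives $\alpha=\sum_{i=1}^m w_i/q$ over all $m$ hypotheses. That would be $\frac{1}{q}\bigl(p+{p\choose 2}\frac{\log 4-\log 3}{\log 2}\bigr)$, not the stated expression in $p_0$. The statement, like Corollary~\ref{HVSarb}, which uses sums over $I_N$, matches the sharper BH-type form $\alpha=\sum_{i\in I_N}w_i/q$. This is the standard PRDS argument, where only true nulls contribute.

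\textbf{The main obstacle.} The hard step is justifying the $I_N$ version from \citet{SHRED}. My first move would be to check that their proof of Theorem~2 bounds the gFDR by $\sum_{i\in I_N}w_i/\alpha$, as in the usual proof that BH controls FDR at $qm_0/m$. If it does, that bound gives the corollary. If it does not, I would re-derive it with the usual argument. Decompose the gFDR into terms over each $i\in I_N$, each bounded by $\frac{w_i}{\alpha}$ times a PRDS-monotone expectation. Then use that $P_i$ is conservative, so the sum collapses to $\sum_{i\in I_N}w_i/\alpha=q$.

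\textbf{Counting.} It remains to evaluate $\sum_{i\in I_N}w_i$. The set $I_N$ consists of the tested sets inside $X_N$: $p_0$ singletons of weight $1$ and ${p_0\choose 2}$ pairs of weight $\frac{\log(4)-\log(3)}{\log(2)}$. Summing gives exactly $q\,\alpha_{HVS_{PRDS}}$.
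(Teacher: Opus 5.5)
Your proposal is correct and follows the paper's route. The paper's proof, by reference to Corollary~\ref{HVSarb}, simply inserts the weights of Lemma~\ref{boundsigma} ($1$ for loops, $\frac{\log(4)-\log(3)}{\log(2)}$ for pairs) into Theorem~\ref{PRDS} and counts the $p_0$ loops and ${p_0\choose 2}$ pairs that make up $I_N$.

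The two issues you raise are ones the paper passes over silently, and you resolve them correctly. First, tested supersets leave $IS$ unchanged, so $\sigma(\overline{J})=\sigma(J)$. Second, the stated cut-off does need the $I_N$-sum form $\sum_{i\in I_N}w_i/q$ rather than the literal $\sum_{i=1}^m w_i/q$. The standard BH-type argument you sketch supplies this once you note $\overline{J}\cap I_N\subseteq\overline{J\cap I_N}$: a true null $H_B$ with $B\supseteq A$ for some $A\in J$ forces $H_A$ to be a true null. The gFDR numerator is then bounded by $\sum_{i\in I_N,\,P_i\leqslant c_{\textrm{max}}}w_i$.
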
   

The proof is the same as Corollary~\ref{HVSarb}. 

\section{Simulations}\label{hvssim}

\subsection{Simulation Design}

To assess the performance of the HVS method compared to other methods
designed for FDR control, simulations are conducted for Gaussian
regression, logistic regression, and Poisson regression. We
simulate data of the form $X \sim N(\mu,\Sigma)$, $\mu = (0,...,0)^T$,
$ \textrm{diag}(\Sigma) = 1$, under four different correlation
structures:

\begin{enumerate}[(1)]

\item Common component correlation:
$$\Sigma_{ij}=\left\{\begin{array}{ll}
1 & \textrm{if }i=j\\
\rho & \textrm{otherwise}
\end{array}\right.$$

\item Clustered correlation: randomly partition the variables into
  disjoint clusters $C_1,\ldots,C_K$ of sizes
  $|C_1|=5,|C_2|=10,|C_3|=15,\ldots$, with $C_K$ containing any
  leftover variables, so when $p=300$, $K=11$ and $|C_K|=25$, while
  when $p=200$, $K=9$ and $|C_K|=20$.
  $$\Sigma_{ij}=\left\{\begin{array}{ll}
1 & \textrm{if }i=j\\
\rho_{k} & \textrm{if }i\ne j\in C_k\\
0 & \textrm{otherwise}
\end{array}\right.$$
where each $\rho_k$ is independantly drawn from a uniform distribution
with a certain range. 

\item Autoregressive (AR(1))
  correlation:

  $$\Sigma_{ij}=\rho^{|i-j|}$$

\item Spatial correlation: 
Spatial correlation is imposed on a $100 \times 10$ grid of coordinates
$v_1, \ldots, v_{1000}$. The correlation between variables is defined as
\[
\cor(X_i, X_j)
= \rho \, e^{-\phi \lVert v_i - v_j \rVert},
\]
where $\lVert v_i - v_j \rVert$ denotes the Euclidean distance between grid points
$i$ and $j$. We set $\rho = 0.6$ and consider two decay parameters,
$\phi = 0.1$ and $\phi = 0.5$. The choice $\phi = 0.1$ produces a more
dispersed spatial field, with correlations remaining strong over larger
distances, while $\phi = 0.5$ corresponds to a more localized field in
which correlations decay rapidly and distant points exhibit very weak
correlation.

\end{enumerate}

For each setting, 100 simulation replicates are used for Gaussian and
Poisson regression, and 200 for logistic regression due to its
tendancy to produce wider confidence intervals.

Covariance matrices are generated using the \texttt{simstudy v0.8.1 R}
package~\citep{simstudy}. A new correlation matrix of the described
structure (common component, clustered,  AR(1) or Spatial correlation) was generated within
each replicate for each scenario. In all simulations $T = 100$ true
predictors were chosen at random. The coefficients $\beta_i$ are
simulated as i.i.d standard normal distributions, for $i=1,\ldots,
T$. Dataset sizes are fixed at $p = 300, n = 1000$ for Gaussian
regression, and $p = 200, n = 5000$ for logistic and Poisson
regression. The intercept term for the Logistic simulation is set so
that the marginal probability is 0.5 for each dataset, and Poisson
counts are generated with intercept fixed at -1, typically yielding
responses between 0 and 10.
With the
exception of the spatial case, these simulations match those used in
\citep{SHRED}, so the results for the shared methods (mirror method,
knockoff method, BH, BY and LASSO) are directly comparable.

Mirror methods follow~\cite{mirrors}: DS uses a single data split,
while MDS uses 50 data splits. Both DS and MDS results are obtained
following the \texttt{R} code given in \cite{mirror_code}. Knockoff
results are generated using the \texttt{knockoff v0.3.6 R}
package~\citep{knockoff_package}.  SHRED methods follow the procedures
for the SHRED, SHRED PRDS and SHREDDER methods outlined in
\citep{SHRED} with the hypothesis weight set at $w_S =
\log_2\left(\frac{2^k}{2^k - 1}\right)$, where $k = |S|$.

For each simulation and each replicate, we simulated a test set, of the
same size as the training data, for evaluating MSE and classification
accuracy. For Poisson regression, MSE is computed under the same
approach as the Gaussian case. All methods target gFDR control at $q =
0.05$.

The HVS results presented were computed using the
\texttt{hypergraph.sizing} and \texttt{GLSUP} packages. The \texttt{R}
code for the simulations is provided in the github repository
\url{https://github.com/s-organ/HVS-R}.

\subsection{Simulation Results}

Figures~\ref{normalHVS}--\ref{poissonHVS} present the gPower, gFDR,
and either MSE or classification accuracy for each method across all
simulation scenarios. For every method, mean squared error (MSE) and
classification accuracy were computed by first refitting a model using
only the variables selected by that method on the training data, and
then assessing predictive performance on an independent test dataset.

In each simulation, we include the LASSO’s MSE or classification
accuracy results in the figures to facilitate comparison of predictive
performance. Results for the LASSO’s power and FDR, however, appear
separately in Appendix~\ref{SUPPHVSsuppsims}. LASSO selections were
obtained using the penalty parameter that minimized the 10-fold
cross-validation error, and predictive performance was evaluated using
the resulting LASSO coefficients.

For all methods, gPower and gFDR are computed using the sizing
function $\sigma$ defined in this paper. For methods that select only
singleton variables, these correspond to the standard definitions of
FDR and power. In the logistic and Poisson simulations, the knockoff
method exhibited substantially lower power than competing approaches
and is therefore omitted from the corresponding figures. For the same
reason, the mirror method with DS is excluded from several plots. Full
results for these methods, together with detailed outcomes for each
simulation scenario, are shown in the tables in
Appendix~\ref{SUPPHVSsuppsims}.

To estimate MSE or classification accuracy for the SHRED methods, we
randomly select a single variable from each chosen set and use this
resulting set for prediction. For the HVS method, after identifying
all loops, we apply a greedy vertex-cover algorithm --- implemented
via the \texttt{R} package \texttt{gor}~\citep{gor} --- to determine a
representative set of predictor variables. The algorithm repeatedly
selects the vertex of highest degree, removes all edges containing
that vertex, and continues until all edges are covered. Any
vertex-cover algorithm would be expected to yield comparable results.


\begin{figure}[htbp]
	\centering
	\includegraphics[width=15cm, height = 15cm]{"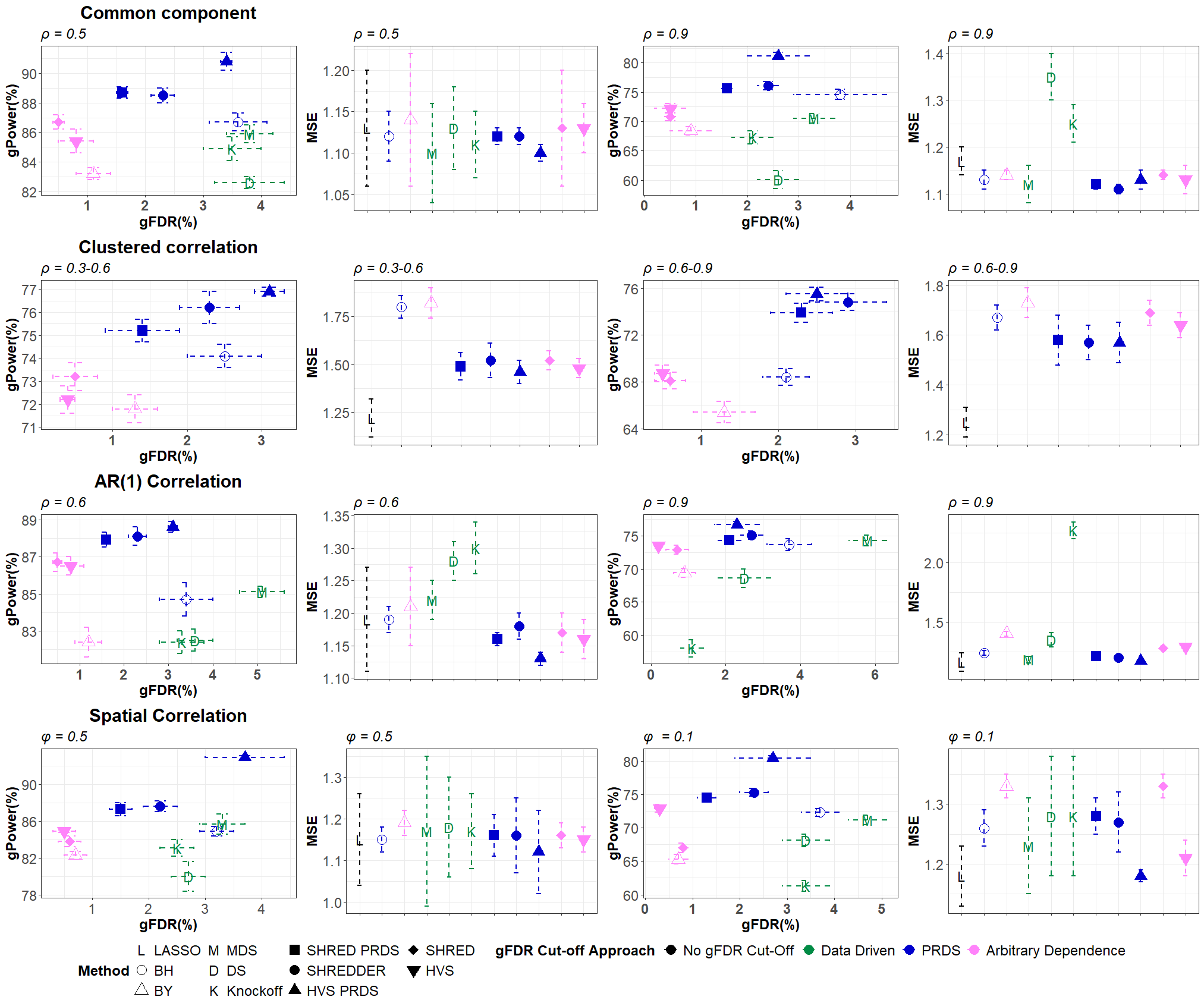"}
	\caption{Gaussian regression simulation results}
	\label{normalHVS}
\end{figure}

For the Gaussian simulations (Figure~\ref{normalHVS}), the HVS-PRDS
method consistently achieves higher gPower than the other methods that
rely on the PRDS assumption. Although this method tends to produce
slightly higher gFDR than the SHRED variants, the gFDR remains
controlled below the target level in all scenarios. Likewise, among
the methods with no dependence assumptions, the HVS procedure
consistently outperforms both BY and SHRED.

The primary advantage of HVS lies in its ability to select
intersecting sets of variables, enabling it to capture signal patterns
that are spatially or temporally structured --- such as under spatial
correlation or AR(1) dependence --- where true signals are dispersed
across correlated groups. In contrast, SHRED is restricted to disjoint
sets, which limits its effectiveness in these settings and results in
lower gPower.

Compared with data-driven approaches such as the mirror method and
knockoffs --- both of which exhibit decreasing power as correlation
strength grows --- the HVS methods maintain substantially higher
gPower along with lower MSE.

Examining MSE across methods provides insight into their practical
variable selection performance under FDR control. Notably, in the
spatial and AR(1) scenarios, the predictive accuracy of HVS-PRDS is
comparable to that of the LASSO, which does not control gFDR and is
specifically designed for achieving strong model prediction. Overall,
both variants of HVS (PRDS and arbitrary dependence) offer a favorable
trade-off between gPower, gFDR control and predictive performance,
particularly in complex correlation structures such as spatial and
autoregressive settings.

\begin{figure}[htbp]
	\centering
	\includegraphics[width=15cm, height = 15cm]{"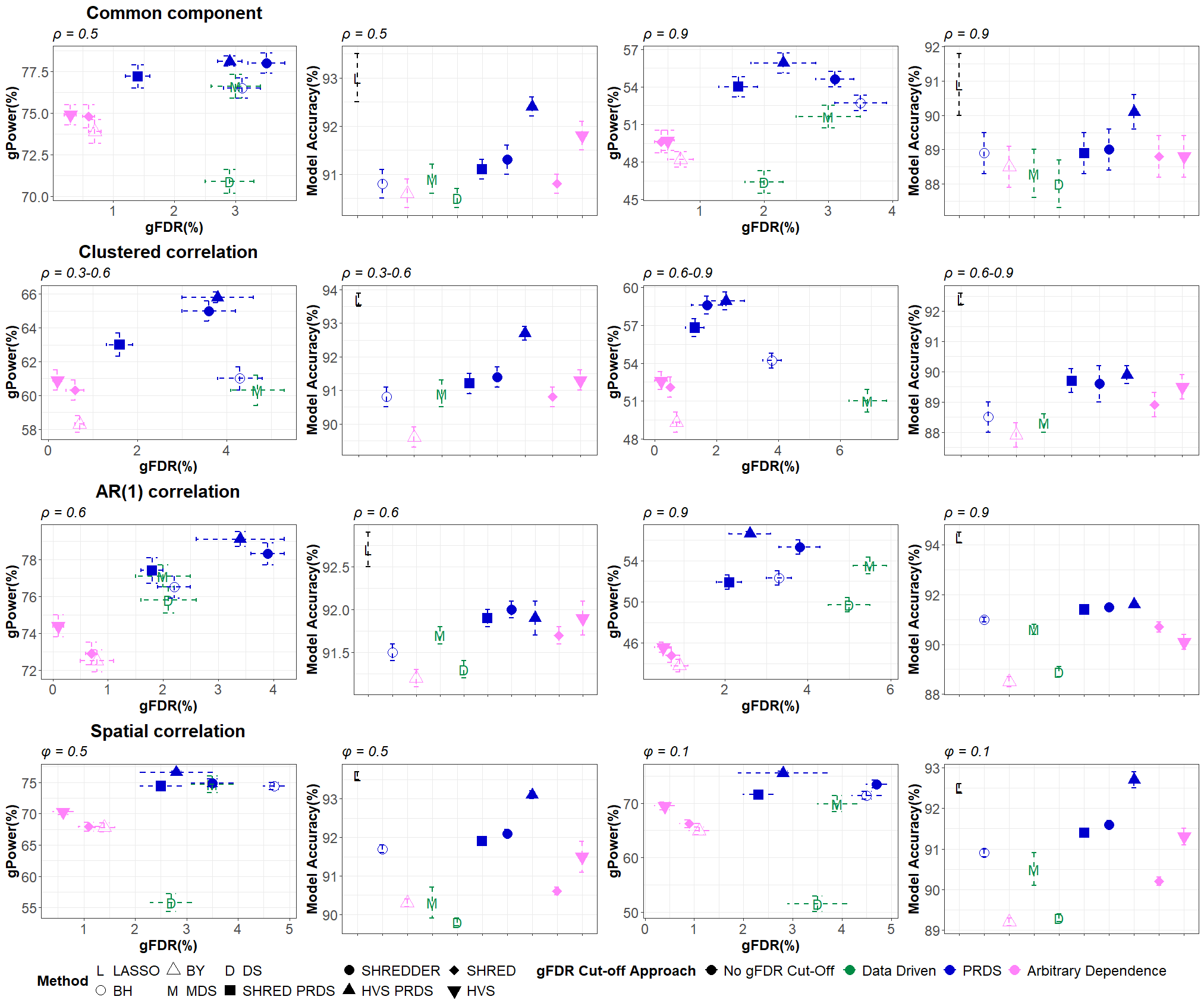"}
	\caption{Logistic regression simulation results}
	\label{logisticHVS}
\end{figure}

For the logistic regression simulations (Figure~\ref{logisticHVS}),
all methods exhibited reduced gPower relative to the Gaussian case, as
expected given the lower information content in the binomial
response. Nonetheless, the qualitative patterns observed in Gaussian
regression were largely retained. The HVS-PRDS method consistently
achieved higher gPower than other procedures assuming PRDS, while
controlling gFDR and maintaining strong predictive accuracy.
Comparable advantages were observed when comparing HVS with the BY and
SHRED methods, which also control gFDR with no dependence assumptions.

Among the data-driven procedures, only the mirror with MDS
method ever performed competitively, and only under low correlation
scenarios.

\begin{figure}[htbp]
	\centering
	\includegraphics[width=15cm, height = 15cm]{"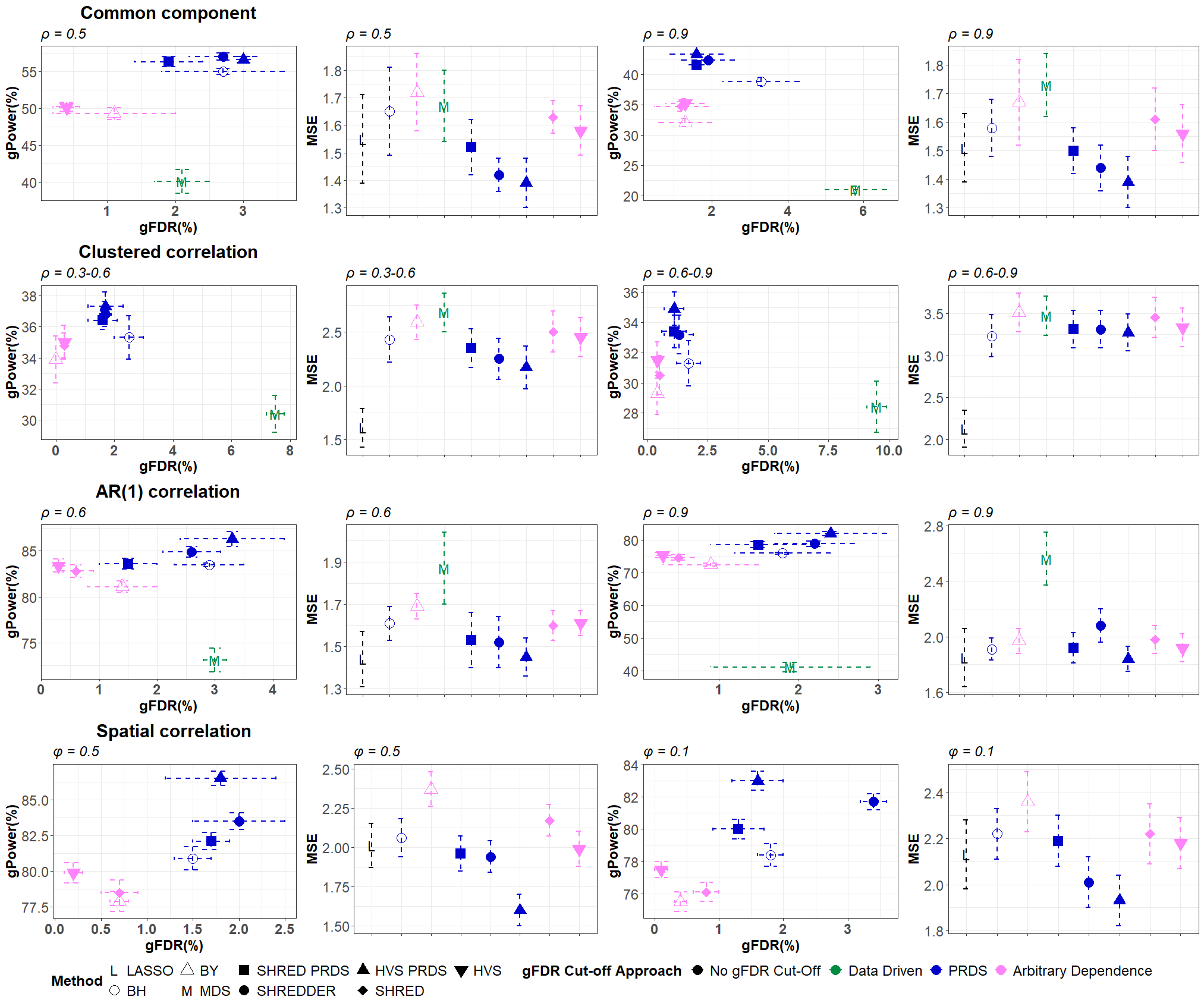"}
	\caption{Poisson regression simulation results}
	\label{poissonHVS}
\end{figure}

For the Poisson regression simulations (Figure~\ref{poissonHVS}), all
methods showed lower gPower compared to the Gaussian case. Overall
trends seen in the Gaussian regression and Logistic regression cases are
also seen here. The HVS-PRDS tended to have a higher gPower than
other PRDS methods while maintaining gFDR control. The improvement in
this method over others was again most evident in the spatial and
AR(1) correlation settings. The data-driven methods performed badly
here, neither achieving high power nor controlling FDR. Full results
are provided in Appendix~\ref{SUPPHVSsuppsims}. 

\subsection{Simulation on Real Microbiome X}

We evaluated the performance of the HVS method relative to other
FDR-controlling procedures using data with real-world correlation
structure. For this analysis, we used the MIDAS 4 global wastewater
microbiome taxonomy dataset~\citep{midas}, which provides whole-number
read counts for taxa collected from wastewater treatment plants
worldwide. Following standard recommendations for adjusting
for sequencing depth in microbiome
studies~\citep{quinn2019compositions,yerke2024proportion} we first
added a small constant ($10^{-6}$) to each count, then converted
to relative abundances and performed a centred log-ratio
transformation. After removing rare and conditionally abundant taxa,
the resulting dataset contained $n = 1278$ samples and $p = 604$
variables.

Using the CLR-transformed relative abundances of each genus as predictors,
we generated 100 simulated datasets with a normally distributed response
$Y \sim N(X\beta, 1)$. For each simulation, we randomly selected
$T = 200$ true nonzero coefficients, with each nonzero $\beta$
independently drawn from a $N(0,1)$ distribution. Variable selection was
performed on the full dataset, and predictive performance was evaluated
using 10-fold cross-validated MSE based on the selected variables. All
methods were assessed at the target FDR level $q = 0.05$.

As in the simulations shown in
Figures~\ref{normalHVS}--\ref{poissonHVS}, the simulated $Y$ and
$\beta$ coefficients are identical to those used in \citep{SHRED}, so
methods shared between the two studies yield identical results.
However, because the SHRED procedures here are weighted according to
the HVS scheme, their outcomes differ slightly from those originally
reported in \citep{SHRED}.

\begin{table}[htbp]
  \centering
  \caption{\label{midassim} Microbiome $X$ data with simulated Gaussian $Y$}
  \small
  \begin{tabular}{cccccccc}
    \hline
	{Method} & {gFDR Cut-off} & {gPower} & {SE gPower} & {gFDR} & {SE gFDR} & {MSE} & {SE MSE} \\
	\hline
	LASSO & No gFDR Cut-off & 88.6\% & 0.2\% & 46.3\% & 0.9\% & 1.17 & 0.04\\
	\hline
	Mirror & MDS (Data driven) & 70.2\% & 0.5\% & 6.7\% & 0.2\% & 1.53 & 0.06\\
	Mirror & DS (Data driven) & 66.5\% & 0.5\% & 3.3\% & 0.2\% & 2.28 & 0.10\\
	Knockoff & Data driven & 69.0\% & 0.8\% & 3.7\% & 0.3\% & 2.68 & 0.05\\
	\hline
	BH & PRDS &  82.3\% & 0.7\% & 3.6\% & 0.4\% & 1.33 & 0.03 \\
	SHRED & PRDS & 84.9\% & 0.2\% & 1.8\% & 0.4\% & 1.31 & 0.04\\ 
	SHREDDER & PPRDS & 85.2\% & 0.5\% & 3.8\% & 0.3\% & 1.28 & 0.04\\
	HVS & PRDS & 85.8\% & 0.7\% & 3.5\% & 0.4\% & 1.25 & 0.08 \\
	\hline
	SHRED & Arbitrary & 80.0\% & 0.2\% & 0.5\% & 0.1\% & 2.08 & 0.06\\
	BY & Arbitrary & 77.7\% & 0.8\% & 0.7\% & 0.3\% & 2.13 & 0.08\\
	HVS & Arbitrary & 82.4\% & 0.7\% & 0.6\% & 0.2\% & 2.04 & 0.09 \\
	\hline
  \end{tabular}
\end{table}

The methods using data‑driven gFDR cutoffs did not perform well in
this setting, highlighting the problems these methods may face in the
long-tailed predictor setting. In contrast, the setwise selection
methods (SHRED and HVS) achieved high power while controlling the gFDR
below the desired level, demonstrating their advantage over
traditional single variable selection methods in highly correlated
data. The HVS methods performed slightly better than the corresponding
SHRED methods in terms of generalized power and MSE, although the
differences were not significant.

\section{Case Study}\label{hvscasestudy}

We analyzed the WHO daily global COVID-19 case counts \citep{covid},
using the dataset available on April~25,~2025. To predict the number
of cases on a given day from the previous 39 days, we reorganized the
data so that each row contained 40 consecutive days of counts for a
country. The first 39 days (``Day~39'' to ``Day~1'') served as
predictors, and the 40th day (``Current Day'') was the response, with
``Day~1'' denoting the day immediately preceding the Current Day.

Any 40‑day sequence that contained a zero case count on any day
(including the current day) was excluded. After filtering, the dataset
contained 4,527 samples of 40‑day case-count sequences drawn from 153
countries, with each country contributing between 1 and 50 sequences
of consistently non‑zero new COVID‑19 cases. Because case counts
varied substantially across countries, all values were log‑transformed
before analysis.

Given the strong likelihood of within-country correlation, we used a
linear mixed‑effects model with a random intercept for country to
perform variable selection. Compared with a standard linear model, the
mixed-effects model produced residuals with reduced autocorrelation,
and yielded a lower AIC. Furthermore, an $F$-test indicated that the
random-effect term was significant. As this is not a generalized
linear model, results are not available for the mirror methods,
knockoff method, or LASSO.  The variables were selected on the entire
data and the MSE was computed for each method using a 10-fold cross
validation on the selected predictors. Table~\ref{covidresults}
summarizes the model selection of each available method. The size of
the rejection for each method is calculated using that method's
respective sizing function.

\begin{table}[htbp]
  \centering
  \caption{\label{covidresults} COVID-19 case study model selection results}
  \begin{tabular}{cclcc}
    \hline
    {Method} & {gFDR Cut-off} &  $\sigma(\mathcal{R})$ & {MSE} & {SE MSE}\\
    \hline
    BH & PRDS & 23 & 0.124 & 0.005 \\
    SHRED & PRDS & 23 & 0.124 & 0.005 \\
    SHREDDER & PPRDS & 22 & 0.125 & 0.005 \\
    HVS & PRDS & 24.9 & 0.114 & 0.005\\
    \hline
    SHRED & Arbitrary & 17 & 0.128 & 0.005 \\
    BY & Arbitrary & 17 & 0.128 & 0.005 \\
    HVS & Arbitrary & 19.3 & 0.117 &  0.004\\
    \hline
  \end{tabular}
\end{table}

\begin{figure}[htbp]
	\centering
	\medskip
	\subfigure[Selected variables (blue --- selected, red --- in the vertex cover)]{\includegraphics[width=8cm, height =
            4cm,clip=TRUE,trim=0 0 144 0]{"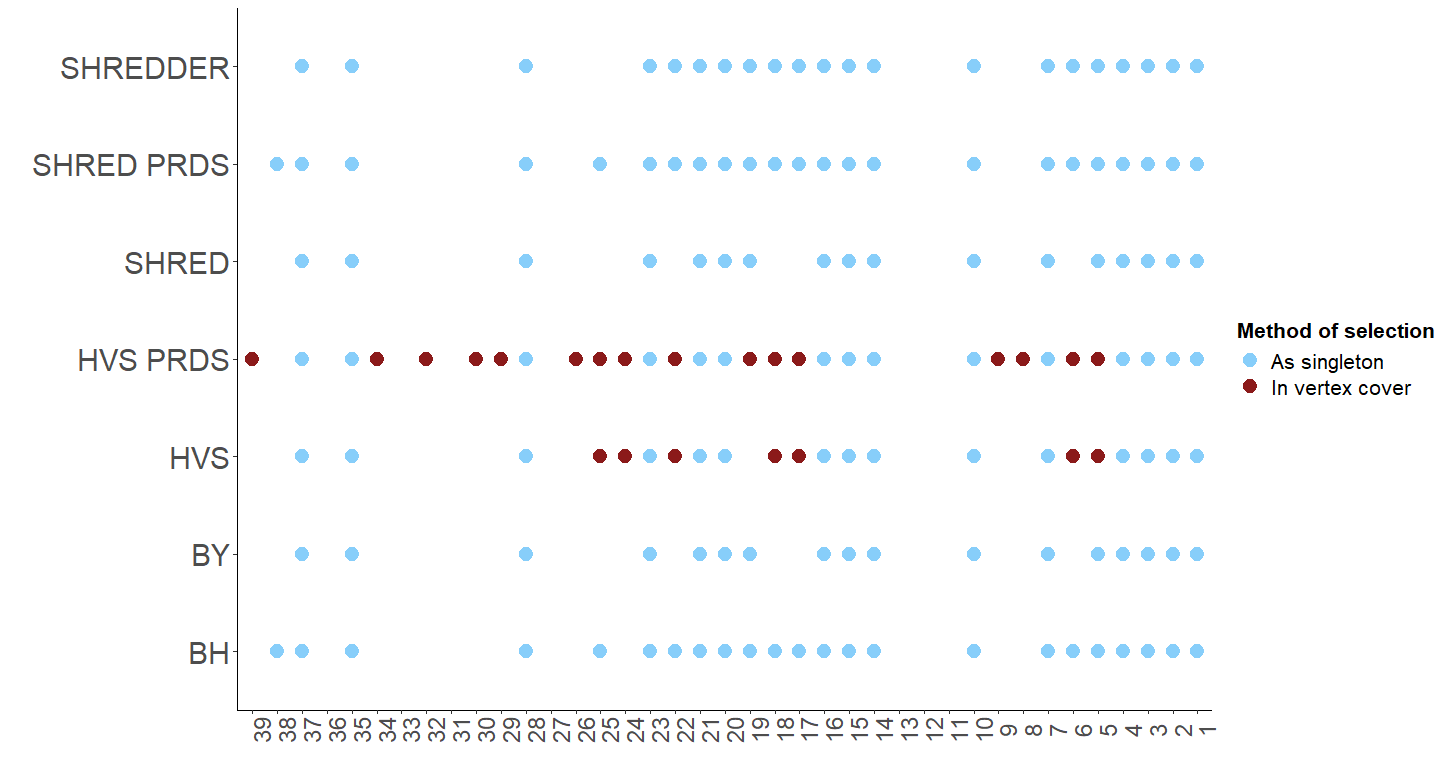"}}
	\subfigure[HVS selected graph]{\includegraphics[width=6cm, height = 4cm]{"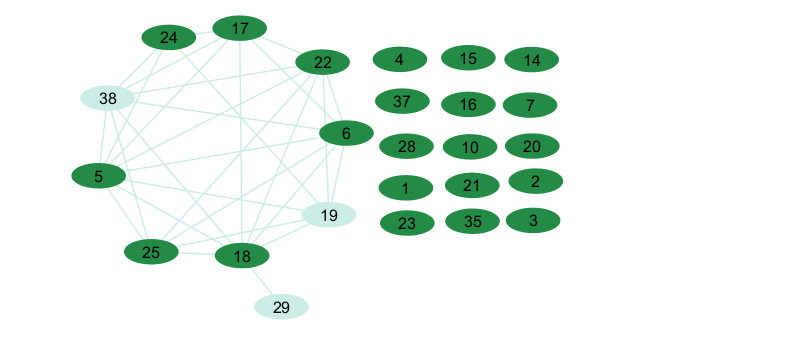"}}

	\subfigure[Sizing Function vs. $p$-value for
          HVS-PRDS]{\includegraphics[width=13cm,clip=TRUE,trim= 0 0 20 50]{"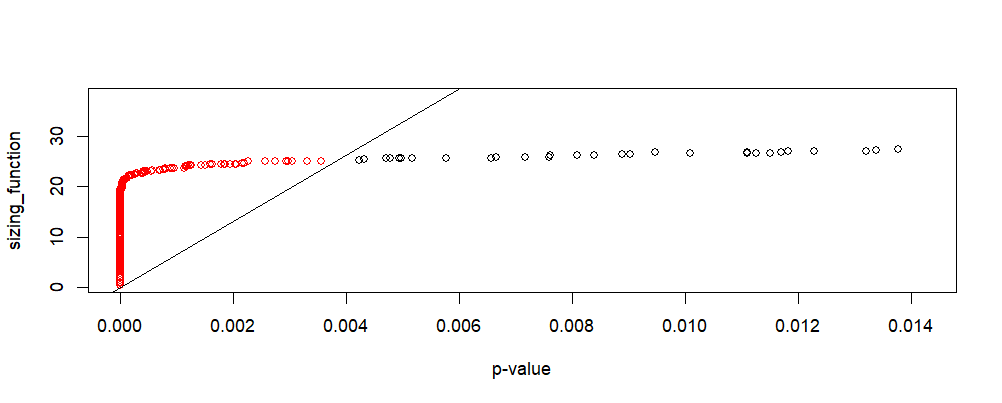"}}
	\caption{Covid case study results.}\label{covidsel}
\end{figure}

Figure~\ref{covidsel}(a) shows the dates selected by each method. For
clarity, days are labeled numerically (e.g., ``Day~1'' is shown as 1).
For the HVS and HVS-PRDS methods in Figure~\ref{covidsel}(a), we show
the variables included in one vertex cover of the selected graph. The
SHRED methods selected only singletons. Figure~\ref{covidsel}(b) shows
the full HVS selection graph, with the vertex cover
highlighted. (HVS-PRDS selected the same singletons but produced a
denser graph.)

We see that the variables selected by HVS and HVS-PRDS are able to
more accurately predict the log case counts of COVID than the
variables selected by other methods. SHRED, SHRED-PRDS and SHREDDER
are unable to select any families of predictors, and thus SHRED-PRDS
and SHRED give the same results as BH and BY. On the other hand HVS is
able to identify a number of pairs of predictors, where the individual
predictors cannot be reliably determined, allowing a much better
predictive model.  Comparing the variables selected by BY and HVS, we
see that BY has selected Days 5 and 19 that were not selected
individually by HVS. These days were included in a large cluster of
days selected, which also includes a number of days that were not
selected by BY. The days selected by BH form a vertex cover of the
hypergraph selected by HVS. It is not quite a minimal vertex cover, as
one of Days~6, 18, 22, or 25 can be removed. This indicates that even
using a strict cut-off for FDR control, HVS is able to identify a
large number of strong predictors. With the PRDS cut-off, HVS is able
to select many more sets of days, and the days selected by BH do not
form a vertex cover, indicating that BH has missed some true
predictors.

Figure~\ref{covidsel}(c) shows the sizing-function applied to the set
of hypotheses with $p$-values below a cut-off. All hypotheses are
shown on the plot, even hypotheses that imply hypotheses with smaller
$p$-values, and therefore do not increase the sizing function. The
sizing function was estimated using
Algorithm~\ref{CountISModified}, so there is some variability,
but we can see that the HVS-PRDS cut-off falls in a relatively
wide-gap between two $p$-values, indicating that this approximation
probably did not influence the rejected hypotheses. We also see that
there are a number of hypotheses with extremely small $p$-values, then
some hypotheses with less small $p$-values, which probably correspond
to false nulls, but less important variables. Many of these hypotheses
do not correspond to an increase in sizing function, most likely
indicating hypotheses that have already been implicitly rejected upon
rejecting a weaker hypothesis with smaller $p$-value. However, we do
still observe an increase in sizing function as the $p$-value cut-off
increases, indicating that these represent new discoveries.
 
\section{Conclusion}\label{hvsconclusion}

In this article, we extend the framework of~\citep{SHRED}, which
selects variables in sets when strong correlations make it impossible
to distinguish true predictors from surrogate variables. While the
SHRED procedures yield disjoint sets through a hierarchical clustering
structure, the HVS method applies generalized linear step‑up
procedures to possibly overlapping sets. This flexibility is often
advantageous: it enables the identification of better predictive
variables while maintaining control of the generalized FDR. A key
component of enabling variable selection with non‑disjoint sets was
defining an appropriate measure of the “number of variables selected.”
For this, we used a sizing function based on the number of independent
sets of the hypergraph, which naturally generalizes the usual count in
the single‑variable case.

Our simulation studies show that the HVS method can outperform
existing gFDR‑controlling methods when predictors are highly
correlated. Notably, although HVS is designed for gFDR control rather
than prediction, it often achieves MSE comparable to that of LASSO, a
method explicitly optimized for predictive accuracy. This highlights
the benefit of testing sets of surrogate variables, allowing more
effective rejection of hypotheses involving correlated predictors. The
advantages of HVS were especially pronounced under spatial correlation
structures, suggesting that the hypergraph-based selection approach
may be particularly well suited for these cases.

Here, we restricted attention to hypergraphs consisting of singletons
and pairs, but the sizing function defined in this article applies to
arbitrary hypergraphs. Extending the method to larger sets could be
valuable in extreme correlation settings where groups of three or more
predictors are indistinguishable. However, the number of sets grows
rapidly with set size, causing the cut-offs $\alpha$ from
Theorems~\ref{GeneralizedBY} and~\ref{PRDS} to increase and thereby
reducing gPower. Effective use of larger sets may therefore require
further methodological developments, potentially improving upon the
results of \citep{SHRED}.
 
\section*{Supplementary Material}

The supplementary material includes three appendices.
Appendix~\ref{SUPPCountIndepProof} contains a more detailed
explanation of the algorithm for approximately counting the number of
independent sets, along with an explanation of why it converges to the
correct answer, and a brief discussion of its computational
complexity. Appendix~\ref{SUPPHVSsuppsims} contains full simulation
results for all simulations, including some methods that were excluded
from the figures. Appendix~\ref{SUPPAppendixAlgorithms} compares
Algorithms~\ref{CountISModified} and~\ref{alg_find_cutoff} in terms of
accuracy and computation time in a number of simulation scenarios.

\section*{Declaration of the use of generative AI and AI-assisted technologies}

During the preparation of this work the authors used Copilot to assist
with writing some early drafts of the manuscript. After using this
tool, the authors reviewed and edited the content as necessary
and take full responsibility for the content of the publication.

\bibliography{reference.bib}
\end{document}